\newtheorem{proposition}{Proposition}
\newtheorem{definition}{Definition}
\newtheorem{remark}{Remark}
\begin{document}

\title{Provably Optimal Quantum Circuits with Mixed-Integer Programming}

\author{Harsha Nagarajan}
\email[Corresponding author: ]{harsha@lanl.gov}
\affiliation{Applied Mathematics and Plasma Physics (T-5), Los Alamos National Laboratory, Los Alamos, New Mexico, USA}

\author{Zsolt Szab\'o}
\affiliation{ARC Centre of Excellence for Engineered Quantum Systems, Macquarie University, Sydney, NSW 2109, Australia}
\affiliation{School of Mathematical and Physical Sciences, Macquarie University, NSW 2109, Australia}


\begin{abstract}
  We present a depth-aware optimization framework for quantum circuit compilation that unifies provable optimality with scalable heuristics. For exact synthesis of a target unitary, we formulate a mixed-integer linear program (MILP) that \emph{linearly} handles global-phase equivalence and uses explicit parallel scheduling variables to certify depth-optimal solutions for small–to–medium circuits. Domain-specific valid inequalities—including identity ordering, commuting-gate pruning, short-sequence redundancy cuts, and Hermitian-conjugate linkages—significantly tighten the relaxation and accelerate branch-and-bound, yielding speedups up to $43\times$ on standard benchmarks. The framework supports hardware-aware objectives, enabling fault-tolerant priorities (e.g., $T$-count) and NISQ-era penalties (e.g., entangling gates). For approximate synthesis, we propose three fidelity-driven objectives: (i) exact—but non-convex—phase-invariant fidelity maximization; (ii) a linear surrogate that maximizes the real trace overlap, yielding a tight lower bound to fidelity; and (iii) a convex quadratic function that minimizes the circuit's Frobenius error. 
  
  To scale beyond exact MILP, we propose a novel rolling-horizon optimization (RHO) that rolls primarily in time, caps the active-qubit set, and enforces per-qubit closure while globally optimizing windowed segments. This preserves local context, reduces the effective Hilbert-space dimension, and enables iterative improvements without ancillas. On a 142-gate seed circuit, RHO yields 116 gates—an 18.3\% reduction from the seed—while avoiding the trade-off between myopic passes and prohibitive solve times. Empirically, our exact compilation framework achieves certified depth-optimal decompositions on standard targets, high-fidelity Fibonacci-anyon weaves, and a 36\% gate-count reduction on multi-body parity circuits. All methods are implemented in the open-source package \texttt{QuantumCircuitOpt}, providing a single optimization framework that bridges exact certification and hardware-aware, scalable synthesis.
  \end{abstract}

\keywords{Quantum Circuit Design, Quantum Computing, Global-phase Equivalence, Mixed Integer Optimization, Rolling Horizon Optimization}

\maketitle

\section{Introduction}\label{sec:intro}

Quantum computation holds the potential to revolutionize information processing and the solution of complex problems. A major obstacle to its practical realization, however, is the compilation of quantum algorithms into executable low-level instructions for physical hardware. This task—known as \emph{quantum compilation}—involves translating high-level descriptions of quantum algorithms into ordered sequences of native gates compatible with a given device, a process complicated by the constraints of current quantum hardware (restricted qubit connectivity, gate infidelities, decoherence) and the fragility of quantum states that are used to store and process information. Within this context, the research community has focused heavily on \(T\) gates, motivated by fault-tolerant quantum computing practices~\cite{jones_layered_2012, campbell_roads_2017}. In surface-code architectures, non-Clifford operations such as the \(T\) gate are typically implemented via magic-state distillation, which dominates both spatial and temporal resource costs. Consequently, minimizing \(T\)-count—and its parallelized counterpart, \(T\)-depth—remains a central objective in quantum circuit optimization. In contrast, on NISQ-era devices without error correction, two-qubit gates (e.g., CNOT) are typically the dominant contributors to error and latency; minimizing their count and depth is therefore critical for performance. Given these hardware-specific cost models, efficient circuit compilation is essential to realizing a practical quantum advantage.

Heuristic approaches have demonstrated strong practical efficacy in reducing quantum circuit size and depth across hardware platforms. QFAST combines combinatorial search with numerical optimization to find low-depth implementations for moderate-scale circuits at high fidelity~\cite{younis_qfast_2021}. Hardware-accelerated data-flow engines on Field Programmable Gate Arrays (FPGAs) extend variational synthesis to 3–9 qubits while maintaining near-unit fidelities via parallel optimization~\cite{rakyta_highly_2024}. Graph-theoretic methods based on the ZX-calculus systematically simplify circuits by merging phase gadgets and optimizing non-Clifford structure~\cite{duncan_graph-theoretic_2020}. More recently, AlphaTensor-Quantum, a deep reinforcement learning framework, has reformulated circuit identity discovery as tensor optimization problems, demonstrating empirical improvements in $T$-gate count reduction through learned policies~\cite{ruiz_quantum_2025}.
Despite their scalability—especially ZX rewriting and AlphaTensor-Quantum—and non-trivial $T$-gate savings, these methods provide neither optimality guarantees nor a posteriori certificates on solution quality.

Exact synthesis algorithms provide optimality certificates only for constrained settings (small instances or restricted gate sets). The meet-in-the-middle search yields depth-optimal Clifford+$T$ decompositions for small multi-qubit circuits~\cite{amy_meet---middle_2013}, while number-theoretic synthesis achieves near-optimal single-qubit $Z$-rotation approximations in Clifford+$T$ via Diophantine equation solutions~\cite{ross_optimal_2016}. Polynomial-time optimizers also exist for $T$-depth minimization in specific circuit structures \cite{amy_polynomial-time_2014}. However, the general problem is provably intractable: $T$-count and $T$-depth minimization for exact unitary synthesis are both NP-hard problems~\cite{amy_polynomial-time_2014}, with the hardness stemming from the combinatorial explosion in valid gate sequences subject to quantum mechanical constraints. This computational barrier explains the field's reliance on heuristic approaches for circuits beyond a trivial scale.

Decades of mathematical advances in mixed-integer programming (MIP) have established it as a natural framework for high-dimensional, combinatorial design problems like quantum circuit compilation. Its applicability extends beyond circuit synthesis—qubit placement and routing, for instance, can be formulated as integer linear programs that jointly optimize initial layout and SWAP insertion~\cite{nannicini_optimal_2023}. MIP modeling provides four key advantages for quantum compilation: (i) Unified modeling of discrete choices and continuous algebraic constraints, (ii) Mature solvers with advanced presolve, cutting planes (or cuts), and parallel branch-and-bound, (iii) Flexible objectives beyond $T$-count (e.g., depth minimization, entangling-gate penalties, hardware-aware weights, fidelity), and (iv) Provable optimality through solution certificates or bounded optimality gaps for tractable instances.  

Nagarajan \textit{et al.} introduced a MIP formalism for circuit optimization in the open-source package \texttt{QuantumCircuitOpt} (\texttt{QCOpt}), providing a rigorous approach to circuit design with optimality guarantees~\cite{nagarajan_quantumcircuitopt_2021}. Subsequently, \cite{henderson_exploring_2022} explored continuous/nonlinear formulations and relaxations that accelerate search while preserving solution quality. Together, these works establish a baseline for exact certification and an extensible modeling stack that we build upon here.

We contribute to \texttt{QCOpt} in five directions. 
(1) We formulate depth as a principal objective via explicit scheduling/precedence variables and per-depth qubit–disjointness, certifying depth-optimal solutions at modest scales. 
(2) We handle global-phase invariance \emph{linearly} in a real embedding, eliminating non-convex phase constraints. 
(3) We propose a catalog of novel valid inequalities—identity ordering, commuting/equivalent-pattern pruning, short-sequence redundancy cuts, and Hermitian-conjugate linkages—that significantly tighten the MILP and accelerate branch-and-bound-based solvers. 
(4) For approximate synthesis, we formulate fidelity-driven objectives: an exact (non-convex) phase-invariant fidelity in real encoding, a linear real-part surrogate, and a piecewise-linear outer approximation of Frobenius error; we also show that phase-optimized Frobenius minimization is equivalent to fidelity's real-part maximization, and 
(5) For scale, we introduce a rolling-horizon algorithm that rolls primarily in time, caps the active-qubit set, and enforces per-qubit closure, preserving the tightened constraints and enabling iterative improvements on larger circuits.

Our framework is inherently adaptable to diverse hardware cost models through objective and gate-weight selection, accommodating both NISQ-era regimes—where two-qubit gate errors dominate—and fault-tolerant architectures, where non-Clifford resources (notably $T$ gates) are the principal cost metric. The proposed formulation targets fundamental \emph{unitary synthesis} rather than post-hoc circuit optimization: given a target unitary and an elementary gate set, the mixed-integer program selects an optimal gate ordered sequence that implements the target exactly (modulo global phase) or within a specified approximation tolerance. Hardware topology is incorporated natively by restricting multi-qubit gate placement to edges of the device coupling graph, thereby enforcing connectivity by construction rather than via post-processing.

These capabilities—together with our depth-optimization objective and catalog of domain-specific valid inequalities—yield provably optimal solutions for small-to-moderate circuits while preserving strong empirical performance at larger scales. This ensures immediate practical value across both near-term platforms and fault-tolerant architectures, with all details in the sections that follow.

\section{Mathematical Formulation} 
\label{sec:math_form}
\noindent
\textbf{Mathematical Preliminaries}: For any integer $n \geqslant 1$, we denote $[n] \coloneqq \{1, 2, \ldots, n\}$ as the set of the first $n$ positive integers, and thus $[n] \setminus \{1\} = \{2, \dots, n\}$. Let \(\mathbbm{1}_n\) be an \(n\times n\) identity matrix. Throughout, $A^\dagger = \overline{A}^{\mathsf{T}}$ denotes the Hermitian-conjugate; for unitary $U \in \mathbb{C}^{n \times n}$, $U^\dagger U = UU^\dagger = \mathbbm{1}_n$. 

Consider a quantum register of $Q$ qubits with associated Hilbert space $\mathcal{H} = (\mathbb{C}^2)^{\otimes Q} \cong \mathbb{C}^{2^Q}$. For any qubit subset $S \subseteq [Q]$, define the unitary group on $S$ and its special unitary subgroup as
\begin{align}
\mathcal{U}(2^{|S|})
& = \left\{ U \in \mathbb{C}^{2^{|S|} \times 2^{|S|}} \;\middle|\; U^\dagger U = \mathbbm{1}_{2^{|S|}} \right\}, \\ 
\mathcal{SU}(2^{|S|})
& = \left\{ U \in \mathcal{U}(2^{|S|}) \;\middle|\; \det U = 1 \right\}.
\end{align}
Let $\mathbb{G} \subset \bigcup_{S \subseteq [Q]} \mathcal{U}(2^{|S|})$ be a finite elementary gate set with fixed parameters. For each gate $g \in \mathbb{G}$ acting on subset $S_g \subseteq [Q]$, its extension to the full Hilbert space $\mathcal{H}$ is denoted $g^{(S_g)} \in \mathcal{U}(2^Q)$ as defined in Definition \ref{defn:gate_extn}. For each qubit $q \in [Q]$, denote $\mathbb{G}_q = \{g \in \mathbb{G} \mid g \text{ acts non-trivially on } q\}$ as the subset of gates affecting qubit $q$ (see Definition \ref{defn:trivial_action}).

\textit{Quantum circuit compilation} seeks an ordered sequence of gates $(U_p)_{p=1}^P$ where each $U_p$ corresponds to the full Hilbert space representation of some gate from $\mathbb{G}$, such that:
\begin{equation}\label{eq:target}
\prod_{p=1}^{P} U_p = e^{i\phi} \widetilde T \quad \phi \in [0,2\pi),
\end{equation}
where $ \widetilde T \in \mathcal{U}(2^Q)$ is the given target unitary, and $P$ is the maximum allowable gate count. Specifically, for each position $p \in [P]$, there exists $g_p \in \mathbb{G}$ such that $U_p = g_p^{(S_{g_p})}$. The circuit is structured across $P$ positions $p \in [P]$, with depth $D$ of the circuit as defined in Definition~\ref{defn:depth}.

The optimization landscape for quantum compilation encompasses multiple physically significant metrics, each addressing critical constraints of near-term and fault-tolerant quantum hardware:

\begin{itemize}[leftmargin=*]
\item \textit{Gate count}: Minimize $P$, the total number of elementary operations, to shorten the gate sequence and reduce computational time and error rates.
\item \textit{Entangling gate count}: Minimize entangling operations (e.g., CNOT, CZ), which typically have error rates 2–10× higher than single-qubit gates across platforms, thereby mitigating crosstalk and other interaction-induced errors.
\item \textit{Non-Clifford resources}: In fault-tolerant architectures, minimize $T$ gates (and other non-Clifford gates), each requiring costly magic-state distillation; this is particularly important in error-corrected computation due to their higher resource demands.
\item \textit{Circuit depth}: Minimize the circuit depth $D$ (overall execution time) by maximal parallelism, thereby mitigating decoherence and dephasing in the algorithm.
\item \textit{Circuit fidelity}: Under fixed resource and hardware constraints, maximize circuit fidelity with the target unitary $\widetilde T$ to ensure optimal performance.
\end{itemize}

To address any of these optimization metrics, the combinatorial complexity of quantum gate selection while respecting hardware constraints and the continuous nature of unitary evolution within a unified optimization framework, we case circuit synthesis as a Mixed-Integer Linear Program (MILP). The complete formalism is  presented in the following subsection.

\subsection{Mixed-Integer Programming Formalism}
\label{subsec:milp}
A Mixed-Integer Linear Program (MILP) optimizes a linear objective subject to affine constraints over mixed continuous-binary variables. The canonical form is:
\begin{equation}\label{eq:milp-standard}
\begin{aligned}
\min_{\mathbf{x},\mathbf{z}} \quad & \mathbf{c}^\top\mathbf{z} \\
\text{s.t.} \quad & A\mathbf{x} + B\mathbf{z} = \mathbf{b} \\
& \mathbf{x}^\mathrm{L} \leqslant \mathbf{x} \leqslant \mathbf{x}^\mathrm{U} \\
& \mathbf{z} \in \{0,1\}^{n_z}
\end{aligned}
\end{equation}
where $A \in \mathbb{R}^{m \times n_x}$, $B \in \mathbb{R}^{m \times n_z}$, $\mathbf{b} \in \mathbb{R}^m$, $\mathbf{c} \in \mathbb{R}^{n_z}$, and $\mathbf{x}^\mathrm{L}, \mathbf{x}^\mathrm{U} \in (\mathbb{R} \cup \{-\infty, +\infty\})^{n_x}$ constitute the problem data. Continuous variables $\mathbf{x} \in \mathbb{R}^{n_x}$ represent real-encoded unitary entries, while binary variables $\mathbf{z} \in \{0,1\}^{n_z}$ encode gate selection and depth assignments. The constraint matrix $[A\ B]$ enforces quantum-mechanical validity through linear relationships derived in detail in Section \ref{subsec:mip_constraints}. When constraints or the objective function contain quadratic (bilinear) terms, the formulation becomes a Mixed-Integer Quadratic Program (MIQP).

Although solving MILPs to optimality is NP-hard in the worst case, modern commercial solvers (Gurobi \cite{Gurobi}, CPLEX \cite{Cplex}) routinely handle large, structured instances via branch-and-cut: they explore a branch-and-bound tree while repeatedly solving tightened linear-programming (continuous) relaxations and adding cutting planes. Decades of advances—presolve, cut separation, primal heuristics, and sophisticated branching—have yielded performance improvements outpacing Moore’s-law hardware gains \cite{koch2022progress}. Our contribution is a tight MILP formulation for quantum circuit synthesis that exactly captures the compilation problem's combinatorial and continuous structure. While this exact representation theoretically guarantees optimal circuits, its $\Theta(4^Q)$ scaling with qubit count $Q$ limits practical applicability. We therefore couple the formulation with quantum-specific enhancements that preserve optimality while enabling synthesis of larger circuits, demonstrating that rigorous MILP modeling combined with domain-aware acceleration yields a scalable exact synthesis framework for non-trivial quantum circuits.

\subsection{Real Encoding of Complex Matrices}
\label{subsec:real-encoding}

To represent complex matrices using real arithmetic while preserving the algebraic structure of matrix operations, we employ a real encoding where each complex entry is mapped to a $2\times 2$ real block. Define the complex-to-real map
\begin{equation}
\label{eq:calR-scalar}
\mathcal{R}:\ \mathbb{C}\to\mathbb{R}^{2\times 2},\qquad 
\mathcal{R}(a+ib)=
\begin{bmatrix}
a & -b\\
b & \ \,a
\end{bmatrix}.
\end{equation}
For \(A\in\mathbb{C}^{n\times n}\) with entries \(A_{ij}\), the elementwise block encoding \(\mathcal{R}(A)\in\mathbb{R}^{2n\times 2n}\) is the \(n\times n\) array of \(2\times 2\) blocks defined by
\begin{equation}\label{eq:calR-elem-embed}
\big(\mathcal{R}(A)\big)_{[2i-1:2i,\;2j-1:2j]} \;=\; \mathcal{R}(A_{ij})
\quad \forall \ i,j\in [n].
\end{equation}
Equivalently, with \(E_{ij}\) the matrix units, \(\mathcal{R}(A)=\sum_{i,j} E_{ij}\otimes \mathcal{R}(A_{ij})\).
Up to a fixed permutation \(\Pi\), this coincides with the canonical form:
\(\mathcal{R}(A)=\Pi\begin{bmatrix}\Re A&-\Im A\\ \Im A&\Re A\end{bmatrix}\Pi^{\mathsf T}\).

\medskip
\noindent\textbf{Basic properties.}
We write \(A^\dagger=\overline{A}^{\mathsf T}\) for the conjugate transpose. The map \(\mathcal{R}\) is an injective \(*\)-algebra homomorphism:
\begin{align}\label{eq:calR-props}
\mathcal{R}(A+B)&=\mathcal{R}(A)+\mathcal{R}(B), &
\mathcal{R}(AB)&=\mathcal{R}(A)\,\mathcal{R}(B),\\
\mathcal{R}(A^\dagger)&=\mathcal{R}(A)^{\mathsf T}, &
\mathcal{R}(\mathbbm{1}_n)&=\mathbbm{1}_{2n}. \nonumber
\end{align}

\medskip
Consequently, if \(A\) is unitary, then
\(
\mathcal{R}(A)^{\mathsf T}\mathcal{R}(A)=\mathcal{R}(A^\dagger A)=\mathcal{R}(\mathbbm{1}_n)=\mathbbm{1}_{2n},
\)
so \(\mathcal{R}(A)\) is orthogonal. Moreover, the following proposition holds true: 
\begin{proposition}
$\det \mathcal{R}(A)=|\det A|^2.$
\label{prop:calR-det}
\end{proposition}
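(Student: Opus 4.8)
The plan is to reduce the claim to the determinant of a $2\times 2$ block matrix and then block-triangularize that matrix by a complex similarity transformation. First I would invoke the identity recorded just above the proposition, $\mathcal{R}(A)=\Pi\,M\,\Pi^{\mathsf T}$ with $M \coloneqq \left[\begin{smallmatrix}\Re A & -\Im A\\ \Im A & \Re A\end{smallmatrix}\right]$ and $\Pi$ a fixed permutation matrix. Since $\Pi$ is orthogonal, $\det\Pi=\pm1$, hence $\det\mathcal{R}(A)=(\det\Pi)^2\det M=\det M$, and the problem is reduced to showing $\det M=|\det A|^2$. Writing $X\coloneqq\Re A$ and $Y\coloneqq\Im A$, so that $A=X+iY$ and $\overline{A}=X-iY$, I record $M=\left[\begin{smallmatrix}X & -Y\\ Y & X\end{smallmatrix}\right]$.

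Next I would conjugate $M$ by the invertible complex block matrix
\begin{equation*}
Q \coloneqq \begin{bmatrix}\mathbbm{1}_n & 0\\ i\,\mathbbm{1}_n & \mathbbm{1}_n\end{bmatrix},
\qquad
Q^{-1} = \begin{bmatrix}\mathbbm{1}_n & 0\\ -i\,\mathbbm{1}_n & \mathbbm{1}_n\end{bmatrix},
\end{equation*}
and carry out the block multiplication, which gives
\begin{equation*}
Q\,M\,Q^{-1} = \begin{bmatrix}X+iY & -Y\\ 0 & X-iY\end{bmatrix} = \begin{bmatrix}A & -Y\\ 0 & \overline{A}\end{bmatrix},
\end{equation*}
the essential point being that the lower-left block collapses: $iX+Y-i(X-iY)=iX+Y-iX-Y=0$. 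Because similarity preserves the determinant and the right-hand side is block upper-triangular, $\det M=\det(Q M Q^{-1})=\det(A)\,\det(\overline{A})$.

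To finish, I would use $\det(\overline{A})=\overline{\det A}$ (conjugation commutes with the determinant polynomial), so $\det M=\det(A)\,\overline{\det A}=|\det A|^2$, giving $\det\mathcal{R}(A)=|\det A|^2$ as claimed.

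I do not expect a genuine obstacle here; the only care points are (i) noting that the permutation $\Pi$ contributes $(\det\Pi)^2=1$ rather than $\pm1$, and (ii) performing the $2\times 2$ block multiplication correctly so the off-diagonal cancellation that produces the upper-triangular form is valid. An alternative route is spectral—the eigenvalues of $\mathcal{R}(A)$ are precisely those of $A$ together with their complex conjugates, whence $\det\mathcal{R}(A)=\prod_j\lambda_j\overline{\lambda_j}=|\det A|^2$—but proving that eigenvalue statement requires essentially the same block-triangularization, so the direct computation above is the most economical.
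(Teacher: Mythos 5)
Your proof is correct, and it takes a genuinely different route from the paper's. The paper proves the identity via a Schur decomposition $A=QTQ^\dagger$, using the multiplicativity $\mathcal{R}(AB)=\mathcal{R}(A)\mathcal{R}(B)$ to reduce to the triangular case, where the determinant factors as $\prod_k\det\mathcal{R}(\tau_k)=\prod_k|\tau_k|^2$; along the way it asserts $\det\mathcal{R}(Q)=1$ for unitary $Q$, which orthogonality of $\mathcal{R}(Q)$ alone only pins down to $\pm1$ (one needs, e.g., connectedness of $\mathcal{U}(n)$ or an argument of exactly the kind you give to rule out $-1$). Your argument instead passes to the canonical block form $M=\left[\begin{smallmatrix}\Re A&-\Im A\\ \Im A&\Re A\end{smallmatrix}\right]$ and block-triangularizes it by the explicit complex similarity $Q M Q^{-1}=\left[\begin{smallmatrix}A&-\Im A\\ 0&\overline{A}\end{smallmatrix}\right]$, giving $\det M=\det A\,\overline{\det A}$ directly; the block computation and the cancellation $iX+Y-i(X-iY)=0$ check out, and $(\det\Pi)^2=1$ handles the permutation. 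What your approach buys is a fully self-contained, elementary computation that sidesteps both the Schur decomposition and the unitary-determinant step; what the paper's approach buys is a one-line reduction once the homomorphism properties \eqref{eq:calR-props} are in hand, at the cost of that slightly underjustified intermediate claim. Both are valid; yours is arguably the cleaner of the two as a standalone argument.
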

\begin{proof}
By Schur decomposition \(A=Q T Q^\dagger\) with \(Q\) unitary and \(T\) upper triangular. Using multiplicativity, \(\det\mathcal{R}(A)=\det\mathcal{R}(Q)\det\mathcal{R}(T)\det\mathcal{R}(Q^\dagger)\).
Since \(Q\) is unitary, \(\det \mathcal{R}(Q)=\det \mathcal{R}(Q^\dagger)=1\).
For triangular \(T\) with diagonal \((\tau_1,\dots,\tau_n)\), \(\det\mathcal{R}(T)=\prod_{k=1}^n \det\mathcal{R}(\tau_k)=\prod_{k=1}^n |\tau_k|^2=|\det T|^2=|\det A|^2\).
\end{proof}
Hence we have \(A\in \mathcal{SU}(n)\Rightarrow \det \mathcal{R}(A)=1\).
\medskip

For a \textit{global complex phase} $\lambda = e^{i\phi}$ ($|\lambda| = 1$), with $\lambda = r + i s$ for $r,s \in \mathbb{C}$, then
\begin{equation}\label{eq:calR-phase}
\mathcal{R}(\lambda A)\;=\; \big(\mathbbm{1}_n \otimes \mathcal{R}(\lambda)\big)\,\mathcal{R}(A)
\;=\; \mathcal{R}(A)\,\big(\mathbbm{1}_n \otimes \mathcal{R}(\lambda)\big).
\end{equation}
Thus a global-phase \(\lambda\) acts blockwise as the planar rotation
\(
\mathcal{R}(e^{i\phi})=\begin{bmatrix}\cos\phi&-\sin\phi\\ \sin\phi&\cos\phi\end{bmatrix},
\)
so any phase-invariant constraints in the complex model translate directly under this encoding.

\begin{proposition}\label{prop:Trace-identities}
Define
$J_n \;=\; \mathbbm{1}_n\!\otimes\!\begin{psmallmatrix}0&-1\\[1pt]1&0\end{psmallmatrix}
\;=\; \mathcal{R}(i\,\mathbbm{1}_n).
$
Then
$$
\Re\!\big(\operatorname{Tr}A\big)=\tfrac12\,\operatorname{Tr}\!\big(\mathcal{R}(A)\big),
\quad
\Im\!\big(\operatorname{Tr}A\big)=-\tfrac12\,\operatorname{Tr}\!\big(J_n\,\mathcal{R}(A)\big).
$$
\end{proposition}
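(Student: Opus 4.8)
The statement is a direct computation once the block structure of $\mathcal{R}(A)$ is unpacked, so the plan is to verify the real-part identity blockwise and then obtain the imaginary-part identity from it by a homomorphism trick rather than repeating the computation. First I would use the elementwise block embedding \eqref{eq:calR-elem-embed}: the $(i,i)$ diagonal $2\times 2$ block of $\mathcal{R}(A)$ is exactly $\mathcal{R}(A_{ii})=\begin{psmallmatrix}\Re A_{ii}&-\Im A_{ii}\\ \Im A_{ii}&\Re A_{ii}\end{psmallmatrix}$, which has trace $2\Re A_{ii}$. Summing the traces of the $n$ diagonal blocks gives
\begin{equation}
\operatorname{Tr}\!\big(\mathcal{R}(A)\big)=\sum_{i=1}^n 2\,\Re A_{ii}=2\,\Re\!\Big(\sum_{i=1}^n A_{ii}\Big)=2\,\Re\!\big(\operatorname{Tr}A\big),\nonumber
\end{equation}
which rearranges to the first claimed identity. (Equivalently one may invoke the canonical form $\mathcal{R}(A)=\Pi\begin{psmallmatrix}\Re A&-\Im A\\ \Im A&\Re A\end{psmallmatrix}\Pi^{\mathsf T}$ and use cyclicity of the trace to strip the permutation $\Pi$, but the blockwise count is cleaner.)

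For the second identity I would exploit that $J_n=\mathcal{R}(i\,\mathbbm{1}_n)$ by definition, so the multiplicativity in \eqref{eq:calR-props} yields $J_n\,\mathcal{R}(A)=\mathcal{R}(i\,\mathbbm{1}_n)\,\mathcal{R}(A)=\mathcal{R}(iA)$. Applying the already-proved real-part identity to the matrix $iA$ gives $\operatorname{Tr}\!\big(J_n\mathcal{R}(A)\big)=\operatorname{Tr}\!\big(\mathcal{R}(iA)\big)=2\,\Re\!\big(\operatorname{Tr}(iA)\big)=2\,\Re\!\big(i\operatorname{Tr}A\big)=-2\,\Im\!\big(\operatorname{Tr}A\big)$, i.e. $\Im(\operatorname{Tr}A)=-\tfrac12\operatorname{Tr}(J_n\mathcal{R}(A))$. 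As a sanity check that needs no new machinery, one can instead multiply the diagonal blocks directly: $\begin{psmallmatrix}0&-1\\1&0\end{psmallmatrix}\mathcal{R}(A_{ii})$ has trace $-2\,\Im A_{ii}$, and summing reproduces the same result.

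I do not anticipate a genuine obstacle here; the only point requiring a little care is bookkeeping of indices under the block embedding \eqref{eq:calR-elem-embed}, and any reordering is harmless because the trace is invariant under the conjugating permutation $\Pi$. The proof is therefore short, and the main modeling payoff—rather than any technical difficulty—is that both $\Re\operatorname{Tr}$ and $\Im\operatorname{Tr}$ become \emph{linear} functionals of the real-encoded matrix entries, which is what makes the fidelity-type objectives expressible in the MILP.
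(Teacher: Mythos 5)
Your proof is correct and follows essentially the same route as the paper: the real-part identity is obtained by summing the traces of the diagonal $2\times 2$ blocks, and the imaginary-part identity follows by applying multiplicativity to write $J_n\mathcal{R}(A)=\mathcal{R}(iA)$ and reusing the first identity. The blockwise sanity check you add is consistent but not needed.
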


\begin{proof}
Write $A_{ij}=a_{ij}+i b_{ij}$. By \eqref{eq:calR-scalar},
$\mathcal{R}(A)$ has $(i,i)$-block $\begin{psmallmatrix}a_{ii}&-b_{ii}\\ b_{ii}&a_{ii}\end{psmallmatrix}$,
whose trace is $2a_{ii}$. Summing over $i$ gives
$\operatorname{Tr}(\mathcal{R}(A))=2\sum_i a_{ii}=2\,\Re(\operatorname{Tr}A)$.

Using this result, for the imaginary part, we have
$
J_n\,\mathcal{R}(A)=\mathcal{R}(i\,\mathbbm{1}_n)\,\mathcal{R}(A)=\mathcal{R}(iA),
$
hence $\operatorname{Tr}(J_n\mathcal{R}(A))=\operatorname{Tr}(\mathcal{R}(iA))
=2\,\Re(\operatorname{Tr}(iA))=2\,\Re(i \operatorname{Tr}(A)) = -2\,\Im(\operatorname{Tr}A)$.
\end{proof}

\noindent
\textbf{Inverse map:} For any $B\in\mathbb{R}^{2n\times 2n}$ with block structure
$B_{2i,\,2j}=B_{2i-1,\,2j-1}$ and $B_{2i-1,\,2j}=-B_{2i,\,2j-1},\ \forall \  i,j\in[n]$,
define
\[
\big(\mathcal{R}^{-1}(B)\big)_{ij}=B_{2i-1,\,2j-1}+i\,B_{2i,\,2j-1}.
\]
Then $\mathcal{R}:\mathbb{C}^{n\times n}\!\to\!\mathbb{R}^{2n\times 2n}$ is a bijection onto the set of real matrices with this structure, with
$\mathcal{R}^{-1}(\mathcal{R}(A))=A$ for all $A$ and
$\mathcal{R}(\mathcal{R}^{-1}(B))=B$ for all such $B$.

\subsection{Linearization of Bilinear Terms}
\label{subsec:linearization}
The optimal circuit compilation problem's constraints in Eq.~\eqref{eq:target} contain multilinear terms and are therefore not directly MILP-representable. We obtain a MILP by replacing each binary–continuous product with its McCormick convex-hull linearization. For $z \in \{0,1\}$ and $x \in [-1,1]$, the set  
\begin{equation}
\mathcal{S} = \big\{ (x,z,\beta) \mid \beta = zx,\ z \in \{0,1\},\ x \in [-1,1] \big\}
\end{equation}
has convex hull  
\begin{equation*}
\label{eq:mccormick}
\operatorname{conv}(\mathcal{S}) = \left\{ (x,z,\beta) \;\middle|\;
\begin{aligned}
&\beta \leqslant  z,\quad \beta \geqslant  -z, \\
&\beta \leqslant  x + 1 - z,\quad \beta \geqslant  x - 1 + z, \\
&x \in [-1,1],\ z \in [0,1]
\end{aligned}
\right\}.
\end{equation*}  
Crucially, when $z$ is binary and $x \in [-1,1]$, $\operatorname{conv}(\mathcal{S})$ exactly enforces $\beta = zx$ due to the boundedness of $x$. The exactness of this linearization also extends to all bilinear binary-binary products ($z_1, z_2 \in \{0,1\}$). Multilinear terms are linearized recursively by introducing auxiliary variables for each binary-continuous factor, preserving exactness \cite{nagarajan2016tightening}. The complete MILP reformulation (with auxiliaries) is in Section \ref{subsec:mip_constraints}.

\subsection{Variables and Constraints}
\label{subsec:mip_constraints}
Given a register of $Q$ qubits and a finite set of elementary gates $\mathbb{G}$ with fixed entries, we seek to construct a quantum circuit composed of at most $P$ gates, placed at positions $p \in [P]$, with an overall depth not exceeding a prescribed maximum depth $D$ (the latter is defined in Definition~\ref{defn:depth} and particularly relevant in depth–optimization tasks). 

\begin{definition}[Trivial vs.\ non-trivial action on qubits]
Let $\mathcal{H}=\mathbb{C}^{2^Q}$ and let $U\in\mathcal{U}(2^Q)$. 
For a qubit subset $S\subseteq[Q]$, we say that $U$ \emph{acts trivially on $S$} iff there exists a unitary $W\in\mathcal{U}(2^{Q-|S|})$ such that
\begin{equation}
U \;=\; P_S^\dagger\!\left(\,\mathbbm{1}_{2^{|S|}}\otimes W\,\right) P_S,
\end{equation}
where $P_S\in\mathcal{U}(2^Q)$ permutes tensor factors so that the qubits in $S$ occupy the first $|S|$ positions (the property is independent of the particular choice of $P_S$). 
If no such $W$ exists, then $U$ \emph{acts non-trivially on $S$}.

In particular, for a single qubit $q\in\{1,\dots,Q\}$, $U$ acts trivially on $q$ iff
\begin{equation}
U \;=\; P_{\{q\}}^\dagger\!\left(\,\mathbbm{1}_{2}\otimes W\,\right) P_{\{q\}}
\quad\text{for some }W\in\mathcal{U}(2^{Q-1}).
\end{equation}
\label{defn:trivial_action}
\end{definition}

\begin{definition}[Gate extension to $\mathcal{H}$]
For an elementary gate $U \in \mathcal{U}(2^{|S|})$ acting non-trivially on qubit subset $S \subseteq \{1,\dots,Q\}$, its extension to full Hilbert space $\mathcal{H}$ is the unitary operator $U^{(S)} \in \mathcal{U}(2^Q)$ defined by  
\begin{equation}
U^{(S)} = P_S^\dagger \left( U \otimes \mathbbm{1}_{2^{Q-|S|}} \right) P_S,
\end{equation}
where $P_S \in \mathcal{U}(2^Q)$ is the permutation that moves the qubits in $S$ (in increasing index order) to the first $|S|$ tensor slots and preserves the relative order of the remaining qubits.
In particular, when $|S| = 1$ (i.e., $S = \{q\}$), $U^{(\{q\})}$ reduces to $U \otimes \mathbbm{1}_{2^{Q-1}}$ in the basis where qubit $q$ is positioned first.
\label{defn:gate_extn}
\end{definition}

\begin{definition}[Circuit depth]
Given a quantum circuit \(\mathcal{C}\) with gate sequence \(g_1, \dots, g_P\) acting on \(Q\) qubits, depth \(D(\mathcal{C})\) is the smallest integer \(D\) for which there exists an assignment of gates to depths \(\delta(p) \in [D]\) such that: 
(i) \(\delta(p) \leqslant  \delta(p+1)\) $\forall$ \(p < P\), and  
(ii) For every qubit \(q \in [Q]\) and depth \(d \in [D]\), at most one gate acting non-trivially on \(q\) is assigned to depth \(d\).
\label{defn:depth}
\end{definition}
Definition \ref{defn:depth} captures the minimal number of sequential time steps required to execute a circuit $\mathcal{C}$ on hardware, where gates acting on disjoint qubits may execute in parallel within the same depth. To formalize this definition in the MIP context, we introduce several key variables and constraints: \\

\noindent
\textbf{Constraints: Gate count minimization} \\ 

\noindent
To facilitate an MILP formulation, we introduce (i) binary selection variables $z_{g,\,p} \in \{0,1\}$ indicating that gate $g\in\mathbb{G}$ is chosen at position $p\in [P]$, and (ii) real matrix variables $G_p\in[-1,1]^{2^{Q+1}\times 2^{Q+1}}$ denoting the fixed real block representation $G_p \equiv \mathcal{R}(U_p)$ of the unitary at position $p$ (see Section~\ref{subsec:real-encoding}). Using the canonical extension of gates $g \in \mathbb{G}$ to the full Hilbert space $\mathcal{H}$ (Definition~\ref{defn:gate_extn}), the gate–assignment constraints are
\begin{subequations}\label{eq:gate_select}
\begin{align}
&\sum_{g\in\mathbb{G}} z_{g,\,p} \;=\; 1, 
\quad \forall\, p\in[P],
\label{eq:gate_select:onehot}
\\
&G_p \;=\; \sum_{g\in\mathbb{G}} z_{g,\,p} \cdot \mathcal{R}(g), 
\quad \forall\, p\in[P].
\label{eq:gate_select:link}
\end{align}
\end{subequations}
Constraint~\eqref{eq:gate_select:onehot} ensures exactly one gate occupies each position, and constraint \eqref{eq:gate_select:link} links the binary choice to the (embedded) gate so that $G_p$ equals the selected element of $\mathbb{G}$ acting on $\mathcal{H}$, encoded as a real matrix. 

Additionally, let variables $\hat{G}_p$ represent the cumulative unitary up to position $p$, also with real values in $[-1,1]^{2^{Q+1} \times 2^{Q+1}}$. To maintain the correct circuit-product structure, we impose:
\begin{align} \label{eq:G-hat}
\hat{G}_1 &= G_1, \nonumber \\
\hat{G}_p &=  \hat{G}_{p-1} \cdot G_p, \quad \forall p \in [P]\setminus \{1\}.
\end{align}
The recursive set of constraints in \eqref{eq:G-hat} is inherently bilinear in continuous-binary products, and thus, we use the successive McCormick linearization technique described in Section \ref{subsec:linearization} to handle them efficiently. \\ 

\noindent
\textbf{Constraints: Circuit depth minimization}  
In addition to the variables and constraints introduced above, for the task of circuit–depth minimization (Definition~\ref{defn:depth}) we introduce assignment binaries $b_{p,\,d}\in\{0,1\}$ indicating that the gate at position $p$ is scheduled at depth $d$ (i.e., $b_{p,\,d}=1$ iff the gate at position $p$ is assigned to depth $d$). The depth-related constraints are:
\begin{subequations}
\label{eq:depth_mip}
\begin{align}
&\sum_{d=1}^{D} b_{p,\,d} \;=\; 1,
\quad \forall\, p \in [P], 
\label{eq:depth_mip:assign}
\\
&0 \;\leqslant \; \sum_{d=1}^{D} d\,\big(b_{p,\,d}-b_{p-1,\,d}\big) \;\leqslant \; 1,
\ \forall\, p \in [P]\setminus \{1\},
\label{eq:depth_mip:mono}
\\
&\sum_{g\in \mathbb{G}_q} \sum_{p=1}^{P} z_{g,\,p} \cdot b_{p,\,d} \;\leqslant\; 1, \
\forall\, q \in [Q], \ d \in [D].
\label{eq:depth_mip:disjoint}
\end{align}
\end{subequations}
Constraint~\eqref{eq:depth_mip:assign} assigns each position to exactly one depth; \eqref{eq:depth_mip:mono} enforces a stepwise, monotonic depth ordering so that the depth at position $p{+}1$ is either equal to or exactly one greater than that at $p$; and \eqref{eq:depth_mip:disjoint} ensures that, at any depth, at most one gate acts on a given qubit $q$ (per-depth qubit disjointness). The bilinear products $z_{g,\,p}\cdot \,b_{p,\,d}$ in \eqref{eq:depth_mip:disjoint} are linearized exactly via the McCormick construction described in Section~\ref{subsec:linearization}.

\subsection{Target Constraint and Global-Phases}
The final condition that must be imposed is that the resulting circuit matches the target unitary $\widetilde T$, i.e., $\hat{G}_P = \mathcal{R}(e^{i \phi} \widetilde T)$ where $\phi \in [0,2\pi)$ is arbitrary. At first glance, this may appear to be a non-convex constraint, as it involves matching two unitaries up to a global-phase (GP). However, the following characterization of the constraint on the last cumulative unitary $\hat{G}_P$ is sufficient to ensure that the circuit reproduces the target up to an arbitrary global phase:
\begin{align}
&\text{GP Target:}\quad \hat{G}_P = \mathcal{R}((r + i s) \cdot \widetilde T), \label{eq:GP-condition}
\end{align}
where $r, s \in [-1,1]$ are newly introduced real variables that represent $e^{i\phi}$ in rectangular form. This constraint avoids the $\phi$-dependency in Eq.~\eqref{eq:GP-condition} while preserving solution equivalence up to global phase.

\begin{proposition}[]
\label{prop:complex_magnitude}
Let $\widetilde T\in\mathcal{U}(2^{Q})$ and let $\hat G_P\in\mathcal{U}(2^{Q+1})$ denote the cumulative product of the selected elementary unitary gates. If, in addition, the linear constraint in \eqref{eq:GP-condition} holds, then $|r+is|=1$ (equivalently $r^{2}+s^{2}=1$).
\end{proposition}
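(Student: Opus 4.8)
The plan is to read off the magnitude of $r+is$ by taking determinants on both sides of the global-phase constraint \eqref{eq:GP-condition} and invoking Proposition \ref{prop:calR-det}. The guiding observation is that the real-encoded determinant is insensitive to a unimodular phase on the complex side—indeed $\det\mathcal{R}(A)=|\det A|^2$ manifestly depends only on $|\det A|$—so for \eqref{eq:GP-condition} to be consistent with $\hat G_P$ being a product of real-encoded unitaries, the scalar $r+is$ is itself forced to be unimodular.

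First I would evaluate the left-hand side. Since $\hat G_P$ is the cumulative product $\prod_{p=1}^{P}\mathcal{R}(U_p)$ with each $U_p\in\mathcal{U}(2^{Q})$, Proposition \ref{prop:calR-det} gives $\det\mathcal{R}(U_p)=|\det U_p|^2=1$ for every $p$, and multiplicativity of the determinant yields $\det\hat G_P=1$. (Equivalently, each $\mathcal{R}(U_p)$ is orthogonal, so $\hat G_P$ is orthogonal; the unit determinant of each factor pins the sign to $+1$.) Next I would evaluate the right-hand side. Writing $n=2^{Q}$, we have $\det\big((r+is)\widetilde T\big)=(r+is)^{n}\det\widetilde T$, hence $|\det((r+is)\widetilde T)|=|r+is|^{n}\,|\det\widetilde T|=|r+is|^{n}$ because $\widetilde T\in\mathcal{U}(n)$ forces $|\det\widetilde T|=1$. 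Applying Proposition \ref{prop:calR-det} to $A=(r+is)\widetilde T$ then gives $\det\mathcal{R}((r+is)\widetilde T)=|r+is|^{2n}=|r+is|^{2^{Q+1}}$.

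Equating the two sides through \eqref{eq:GP-condition} gives $|r+is|^{2^{Q+1}}=1$, and since $|r+is|$ is a nonnegative real number, the unique solution is $|r+is|=1$, i.e.\ $r^{2}+s^{2}=1$. This is essentially a two-line determinant computation, and I do not expect a genuine obstacle: the proposition is a direct corollary of Proposition \ref{prop:calR-det} once one recognizes that passing to determinants annihilates the phase ambiguity. The only points needing minor care are the scalar pull-out $\det((r+is)\widetilde T)=(r+is)^{2^{Q}}\det\widetilde T$ (the exponent being the matrix dimension $2^{Q}$, not $2^{Q+1}$) and the observation that nonnegativity of $|r+is|^{2^{Q+1}}$ simultaneously fixes the sign of $\det\hat G_P$ and delivers the unit modulus.
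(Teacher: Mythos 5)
Your proof is correct, but it takes a genuinely different route from the paper's. The paper argues via unitarity/orthogonality: since $\hat G_P$ is a product of real-encoded unitaries it is orthogonal, so
\begin{equation*}
\mathbbm{1}_{2^{Q+1}}=\hat G_P^{\mathsf T}\hat G_P=\mathcal{R}\big((r-is)\widetilde T^{\dagger}\big)\,\mathcal{R}\big((r+is)\widetilde T\big)=|r+is|^{2}\,\mathcal{R}(\widetilde T^{\dagger}\widetilde T)=|r+is|^{2}\,\mathbbm{1}_{2^{Q+1}},
\end{equation*}
which yields $r^{2}+s^{2}=1$ immediately by comparing scalar multiples of the identity, using only the $*$-homomorphism properties in \eqref{eq:calR-props}. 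You instead take determinants of both sides of \eqref{eq:GP-condition} and invoke Proposition~\ref{prop:calR-det}: $\det\hat G_P=1$ because each factor has $\det\mathcal{R}(U_p)=|\det U_p|^{2}=1$, while $\det\mathcal{R}((r+is)\widetilde T)=|r+is|^{2^{Q+1}}$, forcing $|r+is|=1$ by nonnegativity. Both arguments are two-line computations and both are valid; your attention to the scalar pull-out exponent ($2^{Q}$, the dimension of $\widetilde T$) and to the final root extraction on $[0,\infty)$ is exactly the care the determinant route requires. The paper's approach is marginally more self-contained (it does not need Proposition~\ref{prop:calR-det}, whose own proof already leans on the orthogonality fact your parenthetical cites) and it retains the full matrix identity $\hat G_P^{\mathsf T}\hat G_P=|r+is|^{2}\mathbbm{1}$ rather than collapsing everything to one scalar; your approach, in exchange, makes transparent the observation that the real-encoded determinant is blind to a unimodular phase, which is a nice structural remark in its own right.
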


\begin{proof}
Using unitarity of $\hat{G}_P$ and $\widetilde T$, $
\mathbbm{1}_{2^{Q+1}}
=\hat{G}_P^\dagger \hat{G}_P
=\big(\mathcal{R}((r-is)\widetilde T^\dagger)\big) \big(\mathcal{R}((r+is)\widetilde T)\big)
=|r+is|^{2}\,\mathcal{R}(\widetilde T^\dagger \widetilde T)
=|r+is|^{2}\,\mathbbm{1}_{2^{Q+1}}.
$
Thus $\mathbbm{1}_{2^{Q+1}}=|r+is|^{2}\,\mathbbm{1}_{2^{Q+1}}$ implies $|r+is|^{2}=1$, i.e., $r^{2}+s^{2}=1$.    
\end{proof}

Consequently, when \eqref{eq:GP-condition} holds with $\widetilde T$ and $\hat{G}_P$ unitary, the non-convex identity $r^{2}+s^{2}=1$ follows automatically and need not be imposed explicitly. Thus, the matching of $\hat{G}_P$ to $\widetilde T$ up to a global phase is enforced entirely by the linear GP conditions. 

In principle, the correct condition is equality up to a global phase, as presented above. In practice, however, if both $\widetilde T$ and all generated circuits lie in $\mathcal{SU}(2^Q)$, one may impose the stricter condition with $r=1, \ s = 0$:
\begin{equation}\label{eq:exact_target}
\text{Exact Target:} \quad \hat G_P = \mathcal{R}(\widetilde T),
\end{equation}
which enforces exact equality in $\mathcal{SU}(2^Q)$. This ignores the residual action of the center $\mathcal Z_{2^Q} = \{ e^{2\pi i k / 2^Q} \mathbbm{1}_{2^{Q}}: k=0,\dots,2^Q-1 \}$, but since global phases are physically irrelevant, this approximation has no impact on the compiled circuit’s action on quantum states. We adopt this simplification in our numerical implementation for efficiency, with results presented in Section~\ref{subsec:gp-expts}.

\subsection{Compilation Objectives: Gate Count}
\label{subsec:objectives}
Using the variables introduced above, we define objective functions for common compilation goals.

\medskip
\noindent
\textbf{Objective: Hardware-aware gate count}\\

\noindent
Minimize a weighted count of non-identity gates:
\begin{equation}
\mathrm{minimize}\ \sum_{p=1}^{P}\ \sum_{g \in \mathbb{G}\setminus\{\mathbbm{1}\}} w_g\, z_{g,\ p},
\label{eq:obj_gate_count}
\end{equation}
where $w_g\!\ge\!0$ is a fixed per-gate weight parameter derived from hardware calibration or design priorities. Setting all $w_g\equiv 1$ recovers plain gate count; assigning larger $w_g$ to non-Clifford gates, especially $T$, targets fault-tolerant resource costs ($T$-count), while weighting multi-qubit/entangling gates more heavily reflects NISQ-era hardware coupler error rates. Binary values on $w_g$ recover subset counts (e.g., only non-Clifford or only multi-qubit), and excluding $\mathbbm{1}$ from the objective ensures unused positions incur no cost.

\medskip
\noindent
\textbf{Objective: Circuit depth} \\

\noindent
To minimize the circuit depth $D(\mathcal{C})$, we optimize the schedule by minimizing the depth at the final position $P$. This is achieved via the objective:
\begin{equation}
\mathrm{minimize} \ \sum_{d=1}^{D} d \cdot b_{P,\,d}.
\label{eq:depth_mip:objective}
\end{equation}
Because \eqref{eq:depth_mip} enforces a nondecreasing schedule that can increase by at most one per position, the circuit depth equals the depth of the final position, i.e., $D(\mathcal{C})$ as per Definition~\ref{defn:depth}. Minimizing this objective, therefore, yields a valid schedule with maximal parallelization of gates and hence minimal overall circuit depth.

\subsection{Compilation Objectives: Circuit Fidelity}
\label{subsec:approx}
Exact circuit synthesis is often infeasible and, in many applications, unnecessary: it typically suffices to approximate a target unitary $\widetilde T$ within a tolerance $\varepsilon>0$ in a phase–invariant metric (e.g., the fidelity). For any fixed, finite universal gate set, such as Clifford+$T$, the Solovay–Kitaev theorem guarantees that generic unitaries can be approximated to arbitrary precision with sequence length polylogarithmic in $1/\varepsilon$ (see, e.g., \cite{dawson_solovay-kitaev_2005}). In other settings, one may prefer alternative discrete families—such as Klein’s icosahedral group augmented by the “Super Golden $T$-gate”~\cite{parzanchevski_super-golden-gates_2018}—or seek to realize the canonical $H$, $X$, and $T$ gates by braiding or weaving Fibonacci anyons on topological quantum computers ~\cite{rouabah_compiling_2021,field_introduction_2018}. In these regimes, the design goal shifts: rather than minimizing gate count per se, one fixes a resource budget (total count, depth, $T$-count, etc.) and maximizes circuit fidelity. 

We quantify performance using the phase-invariant fidelity \cite{khatri2019quantum},
\begin{equation}\label{eq:fidelity-def}
F((U_p)_{p=1}^P)  \equiv \frac{1}{2^{2Q}}\left|\operatorname{Tr}\!\big(\widetilde T^\dagger \prod_{p=1}^{P} U_p\big)\right|^2.
\end{equation}
This is the squared, normalized Hilbert–Schmidt overlap between the implemented unitary $\prod_{p}U_p$ and the target $\widetilde T$. It obeys $0\leqslant F(\cdot) \leqslant 1$, with $F(\cdot)=1$ \emph{if and only if} $\prod_{p}U_p=e^{i\phi}\widetilde T$ for some global-phase $\phi$, and it is invariant under rephasing of either argument.

\medskip  
\noindent\textbf{Exact fidelity objective} \\ 
When MIP solvers support non-convex objectives, we optimize the phase-invariant fidelity \eqref{eq:fidelity-def} directly. Using real-encoding from Prop.~\ref{prop:Trace-identities}, define:  
\begin{align}  
\alpha &\equiv \tfrac{1}{2^{Q+1}}\operatorname{Tr}\!\big(\mathcal{R}(\widetilde T)^{\mathsf T}\hat{G}_P \big) = \Re\!\Big(\tfrac{1}{2^{Q}}\operatorname{Tr}(\widetilde T^\dagger\prod_{p=1}^{P} U_p)\Big), \label{eq:lin-fid-obj} \\  
\beta &\equiv -\tfrac{1}{2^{Q+1}}\operatorname{Tr}\!\big(J_{2^Q}\mathcal{R}(\widetilde T)^{\mathsf T}\hat{G}_P\big) = \Im\!\Big(\tfrac{1}{2^{Q}}\operatorname{Tr}(\widetilde T^\dagger \prod_{p=1}^{P} U_p)\Big), \label{eq:lin-fid-im}  
\end{align}  
where $J_{2^Q} = \mathbbm{1}_{2^Q} \otimes \begin{psmallmatrix} 0 & -1 \\ 1 & 0 \end{psmallmatrix}$. Since $F(\cdot) = \alpha^2 + \beta^2$, exact fidelity maximization requires  
\begin{equation}\label{eq:exact-fid-obj}  
\max\ \alpha^2 + \beta^2,  
\end{equation}  
while subject to all the base model's linear constraints. Owing to the modulus-squared trace in our decision variables, objective \eqref{eq:exact-fid-obj} is non-convex, which precludes direct treatment by convex optimization methods. While theoretically exact, this formulation becomes computationally intractable beyond small instances due to combinatorial non-convexity. We therefore investigate two linear, MILP-amenable surrogates of \eqref{eq:fidelity-def}. Their comparative empirical performance is reported in Section~\ref{sec:examples}.

\medskip  
\noindent\textbf{Linearized fidelity (real-part surrogate)}  \\
We now present a MILP-amenable surrogate by maximizing only the real trace overlap of the exact objective in \eqref{eq:exact-fid-obj}:  
\begin{equation}\label{eq:lin-fid-obj-again}  
\max\ \alpha 
\end{equation}  
Since $\beta^2 \geqslant 0$, $\alpha^2 \leqslant F(\cdot)$ provides a strict lower bound for $F(\cdot)$. Critically, this surrogate becomes asymptotically tight as $F(\cdot) \to 1$: when $F(\cdot) > 0.999$, residual phase errors ($\beta$) are typically negligible under $\mathcal{SU}(2^Q)$ encoding, ensuring $\alpha \approx \sqrt{F(\cdot)}$. 

\medskip
\noindent\textbf{Frobenius Error Objective} \\

\noindent
Next, we present an equivalent characterization of the linearized fidelity objective \eqref{eq:lin-fid-obj-again} while enabling scalable implementation by parametrizing the mismatch between the compiled circuit and the target unitary. Introduce a deviation matrix $E\in\mathbb{R}^{2^{Q+1}\times 2^{Q+1}}$ (elementwise bounds $-\epsilon\leqslant E_{ij}\leqslant \epsilon$) via
\begin{equation}\label{eq:approx-target}
\hat{G}_P \;=\; \mathcal{R}(\widetilde T) + E .
\end{equation}
A natural phase-invariant fidelity objective surrogate (in the real embedding) is to minimize the Frobenius error
\begin{equation}\label{eq:oa-frob}
\min\ \|E\|_F^2 \;=\; \min\ \operatorname{Tr}(E E^{\mathsf T}),
\end{equation}
which is a \textit{convex} quadratic objective with the existing base model's linear constraints. We now present Proposition~\ref{prop:objective-equivalence}, which characterizes the relationship between Frobenius-error minimization \eqref{eq:oa-frob} and fidelity maximization \eqref{eq:exact-fid-obj} objectives.
\begin{proposition}
\label{prop:objective-equivalence}
Consider a circuit compilation problem with fixed maximum depth $P$, fixed gate set $\mathbb{G}$, and target unitary $\widetilde{T}$. Let $\mathcal{S}$ denote the feasible set of circuits $(U_p)_{p=1}^P$ such that $\hat{G}_P \;=\; \mathcal{R}(\widetilde T) + E$. Then:
\begin{enumerate}
    \item[$\boldsymbol{(i)}$] $\|E\|_F^2 = 2^{Q+2}(1-\alpha)$, where $\alpha$ is defined as in~\eqref{eq:lin-fid-obj}.
    \item[$\boldsymbol{(ii)}$] Minimizing $\|E\|_F^2$ is equivalent to maximizing $\alpha$, the real trace overlap of the fidelity objective $F(\cdot)$, but \emph{not} necessarily to maximizing $F(\cdot)$ when $\max_{\mathcal{S}} F(\cdot) < 1$.
\end{enumerate}
\end{proposition}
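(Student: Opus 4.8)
The plan is to establish (i) by a direct Frobenius-norm expansion that exploits orthogonality of real-encoded unitaries, and then to obtain (ii) as a one-line corollary for the positive (equivalence) claim together with a small explicit instance for the negative claim. First I would note that on the feasible set $\mathcal{S}$ the cumulative matrix satisfies $\hat G_P = \mathcal{R}\big(\prod_{p=1}^{P} U_p\big)$ by the homomorphism property \eqref{eq:calR-props} applied to the recursion \eqref{eq:G-hat}; since $\prod_p U_p$ is unitary, $\hat G_P$ is orthogonal, and $\mathcal{R}(\widetilde T)$ is orthogonal for the same reason. (This is exactly where it matters that $\mathcal{S}$ consists of genuine circuits rather than relaxed solutions.) Writing $E=\hat G_P-\mathcal{R}(\widetilde T)$ and expanding $\|E\|_F^2=\operatorname{Tr}(EE^{\mathsf T})$,
\[
\|E\|_F^2 \;=\; \operatorname{Tr}\big(\hat G_P\hat G_P^{\mathsf T}\big) + \operatorname{Tr}\big(\mathcal{R}(\widetilde T)\mathcal{R}(\widetilde T)^{\mathsf T}\big) - \operatorname{Tr}\big(\hat G_P\mathcal{R}(\widetilde T)^{\mathsf T}\big) - \operatorname{Tr}\big(\mathcal{R}(\widetilde T)\hat G_P^{\mathsf T}\big).
\]
The first two traces each equal $\operatorname{Tr}(\mathbbm{1}_{2^{Q+1}})=2^{Q+1}$ by orthogonality; the last two are equal because one is the transpose of the other, and by cyclicity of the trace and the definition \eqref{eq:lin-fid-obj} of $\alpha$ each equals $\operatorname{Tr}(\mathcal{R}(\widetilde T)^{\mathsf T}\hat G_P)=2^{Q+1}\alpha$. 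Collecting terms yields $\|E\|_F^2 = 2\cdot 2^{Q+1}-2\cdot 2^{Q+1}\alpha = 2^{Q+2}(1-\alpha)$, which is (i).

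For (ii), the positive half is immediate: $2^{Q+2}>0$ is a constant independent of the circuit, so the minimizer of $\|E\|_F^2$ over $\mathcal{S}$ coincides with the maximizer of $\alpha$ over $\mathcal{S}$. For the negative half, recall $F(\cdot)=\alpha^2+\beta^2$ from \eqref{eq:lin-fid-obj}–\eqref{eq:lin-fid-im}; hence an $\alpha$-maximizer need not maximize $\alpha^2+\beta^2$ whenever the feasible imaginary overlap $\beta$ is not forced to vanish at the $\alpha$-optimum. I would make this concrete with a minimal instance — a gate set and target admitting essentially two circuits whose normalized target overlaps have moduli $\rho_1,\rho_2<1$, the first purely real (so $\alpha_1=\rho_1$, $\beta_1=0$) and the second complex with real part $\alpha_2<\rho_1$ but modulus $\rho_2>\rho_1$: then the Frobenius/$\alpha$-optimal circuit attains $F=\rho_1^2$, whereas the genuinely $F$-optimal circuit attains $F=\rho_2^2>\rho_1^2$, and since no circuit reaches modulus one, $\max_{\mathcal{S}}F(\cdot)<1$.

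The main obstacle is the negative half of (ii): part (i) and the equivalence in (ii) are bookkeeping, but the counterexample requires pinning down an actual elementary gate set and target $\widetilde T$ — not merely abstract trace values — realizing the required geometry of $(\alpha,\beta)$ pairs; alternatively one can simply point to the empirical instances in Section~\ref{sec:examples} where the real-part surrogate \eqref{eq:lin-fid-obj-again} and the exact fidelity objective \eqref{eq:exact-fid-obj} return different optimal circuits. I would also add the remark motivating the ``$\max_{\mathcal{S}}F(\cdot)<1$'' hypothesis: as $\max_{\mathcal{S}}F(\cdot)\to 1$ the optimal $\beta\to 0$, so the two objectives coincide asymptotically, consistent with the discussion following \eqref{eq:lin-fid-obj-again}.
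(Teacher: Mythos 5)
Your proposal is correct and follows essentially the same route as the paper: the same Frobenius expansion using orthogonality of $\hat G_P$ and $\mathcal{R}(\widetilde T)$, the same identification of both cross terms with $2^{Q+1}\alpha$ via \eqref{eq:lin-fid-obj}, and the same monotonicity argument for the equivalence in (ii). The only difference is that for the negative half of (ii) you flag the need for a concrete counterexample (or defer to the empirical instances), whereas the paper argues abstractly that the orthogonality constraint leaves $\beta$ free to vary at fixed $\alpha$ — your added scruple is reasonable but does not change the argument.
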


\begin{proof}
\noindent$\boldsymbol{(i)}$: 
Using the orthogonality of $\hat{G}_P$ and $\mathcal{R}(\widetilde{T})$, and expanding the Frobenius norm using $E = \hat{G}_P - \mathcal{R}(\widetilde{T})$:
\begin{align*}
\|E\|_F^2 &= \operatorname{Tr}\big((\hat{G}_P - \mathcal{R}(\widetilde{T}))(\hat{G}_P - \mathcal{R}(\widetilde{T}))^{\mathsf{T}}\big) \\
&= 2 \cdot \operatorname{Tr}(\mathbbm{1}_{2^{Q+1}}) - \operatorname{Tr}(\hat{G}_P\mathcal{R}(\widetilde{T})^{\mathsf{T}}) - \operatorname{Tr}(\mathcal{R}(\widetilde{T})\hat{G}_P^{\mathsf{T}})
\end{align*}

\noindent
By trace properties (see (Prop.~\ref{prop:Trace-identities}).), both cross terms equal $\operatorname{Tr}(\mathcal{R}(\widetilde{T})^{\mathsf{T}}\hat{G}_P)$, and by equation \eqref{eq:lin-fid-obj}, so:
\begin{align*}
\|E\|_F^2 &= 2 \cdot \operatorname{Tr}(\mathbbm{1}_{2^{Q+1}}) - 2 \cdot \operatorname{Tr}(\mathcal{R}(\widetilde{T})^{\mathsf{T}}\hat{G}_P) \\
&= 2 \cdot 2^{Q+1} - 2 \cdot (2^{Q+1} \alpha) \\
&= 2^{Q+2}(1 - \alpha)
\end{align*}
This completes the proof of $\boldsymbol{(i)}$.

\noindent$\boldsymbol{(ii)}$:  From $\boldsymbol{(i)}$, $\|E\|_F^2 = 2^{Q+2}(1 - \alpha)$ is a strictly decreasing function of $\alpha$. Therefore, minimizing $\|E\|_F^2$ is equivalent to maximizing $\alpha$. However, since $F(\cdot) = \alpha^2 + \beta^2$, maximizing $\alpha$ does not necessarily maximize $F(\cdot)$ when $\beta$ can vary independently. This independence arises because the orthogonality constraint $\hat{G}_P\hat{G}_P^{\mathsf{T}} = \mathbbm{1}$ permits multiple values of $\beta$ for fixed $\alpha$ (and hence fixed $\|E\|_F^2$), depending on the alignment of $E$ with $J_{2^Q}\mathcal{R}(\widetilde{T})$.
\end{proof}

\begin{remark}
\label{rem:fidelity_bound}
From the identity $\boldsymbol{(i)}$ of Proposition \eqref{prop:objective-equivalence} and fidelity definition $F(\cdot) = \alpha^2 + \beta^2$, we obtain a tight lower bound:
\begin{equation}
\label{eq:fidelity-lb}
F(\cdot) \geqslant \left(1 - \frac{\|{E}\|_F^2}{2^{Q+2}}\right)^2,
\end{equation}
with equality iff $\beta = 0$. This bound is asymptotically tight as $F(\cdot) \to 1$: when $F(\cdot) > 0.999$, residual phase errors ($\beta$) can typically become negligible under $\mathcal{SU}(2^Q)$ encoding. 
\end{remark}
Although \eqref{eq:oa-frob} is convex quadratic, to retain a \textit{linear} objective, and thus a MILP formulation that scales, we globally under-approximate each quadratic term $E_{ij}^2$ by a finite family of first-order (tangent) approximations to $x\mapsto x^2$ on an uniform grid $\{a_k\}_{k=1}^{K}\subset[-\epsilon,\epsilon]$ (see Table~\ref{tab:approx-fib-3targets} for speedups). Introduce auxiliaries $\widehat{E}_{ij}\!\ge\!0$ and write the objective with tangent constraints as
\begin{equation}\label{eq:oa-tangents}
\begin{aligned}
\min\ & \sum_{(i,j)\in[2^{Q+1}]\times[2^{Q+1}]} \ \widehat{E}_{ij}\\
\text{s.t.}& \quad \widehat{E}_{ij} \;\geqslant\; 2 a_k\, E_{ij} - a_k^2,\quad \forall \ (i,j),\ k\in[K].
\end{aligned}
\end{equation}
Consequently, for any feasible solution of \eqref{eq:oa-tangents}, the optimal objective value is a valid \emph{lower bound} to the true quadratic objective \eqref{eq:oa-frob}. Tightness, also a surrogate for Fidelity, improves monotonically with $K$. 

Alternative objectives based on other matrix norms (e.g., $\ell_1$ and $\ell_\infty$) were evaluated against the Frobenius-based quadratic outer-approximation surrogate for measuring compilation error. Across benchmark circuits, they yielded negligible fidelity gains while increasing wall-clock time; accordingly, we report results using the quadratic outer-approximation metric, which offered the best quality–time trade-off.

Neither linear surrogate (\eqref{eq:lin-fid-obj-again} or \eqref{eq:oa-tangents}) guarantees optimality for fidelity maximization in \eqref{eq:fidelity-def}, but both achieve near-optimal fidelities (\(F(\cdot) > 0.999\)) significantly faster than non-convex or brute-force methods. Phase-invariant alternatives (e.g., minimizing \(\|\hat{G}_P - e^{i\phi}T\|_F\) over \(\phi\)) yield equivalent linear formulations. We evaluate the scalability of these approaches versus the exact MIQP \eqref{eq:exact-fid-obj} in Section~\ref{sec:examples}.

\section{Speeding up the Compilation}
\label{sec:valid_constraints}

The MILP formulation in Section \ref{sec:math_form} (the ``base'' formulation), exactly models the quantum circuit synthesis problem, theoretically guaranteeing global optimality for any transpilation task with a finite gate set. However, the combinatorial structure of quantum circuits induces significant computational challenges: the formulation exhibits inherent symmetry (multiple equivalent gate sequences yielding identical unitaries), a weak continuous (linear-programming) relaxation due to the bilinear nature of quantum operations, and exponential growth in variable count with respect to qubit count. These properties lead to slow convergence in branch-and-cut algorithms, as evidenced by empirical results showing impractical solve times for circuits exceeding 3 qubits or depth 10.

To strengthen the formulation, we derive three families of valid constraints that preserve the global optimal solution while improving the integrality gap and reducing symmetry in the branch-and-bound tree. While theoretical analysis confirms their validity, the practical efficacy of these enhancements depends critically on the specific circuit and must be evaluated empirically. Our experimental results in Section \ref{sec:examples} demonstrate that these domain-specific valid constraints reduce solve times in state-of-the-art solvers by up to two orders of magnitude for non-trivial benchmarks.

\subsection{Symmetry Constraints}
\label{subsec:symm_constraints}
The first set of valid constraints works by reducing the search space by eliminating valid symmetrical optimal solutions. One immediate example arises from the placement of identity gates. Given an optimal compilation with identity gates, an equally optimal solution can be obtained by shifting these identities to other positions. To eliminate such symmetry, we introduce the constraint:
\begin{equation}\label{eq:IGS-constraint}
z_{\mathbbm{1},\,p-1} \leqslant  z_{\mathbbm{1},\,p}, \quad \forall\, p \in [P] \setminus \{1\}.
\end{equation}
We refer to this as the \textit{Identity Gate Symmetry} constraint.

Similarly, consider two gates $g_i$ and $g_j$ that commute with each other. If these gates appear consecutively, swapping their order yields another equally optimal solution. To break this symmetry, we impose:
\begin{equation}
\label{eq:commuting_pair}
z_{g_i,\, p-1}+z_{g_j,\, p}\leqslant  1, \quad \forall\, p \in [P] \setminus \{1\}.
\end{equation}
thereby reducing the search space by eliminating similar solutions.

A generalization of this commutativity constraint involves evaluating pairs of gates and identifying equivalent products. The problem is then constrained so that only one of the equivalent pairs is allowed, while the other is made infeasible. For example, this well-known relationship of the Pauli gates $X$ and $Z$ together with the Hadamard: $X \cdot H = H \cdot Z$ when all act on the same qubit, they are equivalent. By constraining the possibility that $Z$ can follow a Hadamard, i.e., $z_{H,\, p}+z_{Z,\, p+1}\leqslant  1$, we can reduce the symmetries in the search space of optimal solutions.

Similarly, symmetry reduction via \emph{equivalent triplet} elimination is crucial when distinct three-gate sequences implement the same unitary. This approach is particularly critical for topological quantum compilation via Fibonacci anyon braiding \cite{hormozi2007topological}, where the braid group $B_n$ (governing exchanges of $n$ anyons with $n-1$ generators $\sigma_i$) satisfies the Yang–Baxter relation \cite{baxter2016exactly, peng2022quantum}: 
$$
\sigma_i \cdot \sigma_{i+1}  \cdot \sigma_i = \sigma_{i+1}  \cdot \sigma_i  \cdot \sigma_{i+1}, \qquad \forall \ i=1,\dots,n-2,
$$
which induces local two-way redundancies. To prune symmetric branches without excluding optimal solutions, we enforce a canonical choice by forbidding one representative. Concretely, for consecutive positions ($p,\ p{+}1,\ p{+}2$) we impose the valid inequality
\begin{align}
&z_{\sigma_i,\, p}+z_{\sigma_{i+1},\,p+1}+z_{\sigma_i,\, p+2}\leqslant  2, \nonumber\\ 
&\qquad \quad \forall i\in \{1,\dots,n-2\},\ \  \forall, p\in \{1,\dots,P-2\},
\end{align}
which excludes ($\sigma_i \cdot\sigma_{i+1} \cdot\sigma_i$) while retaining feasibility through the equivalent ($\sigma_{i+1} \cdot\sigma_i \cdot\sigma_{i+1}$). For Fibonacci anyon models where such local symmetries proliferate with $n$, this symmetry-breaking technique significantly accelerates MIP's branch-and-bound convergence while preserving optimality.

\begin{remark}
\label{rem:depth_no_commute_cons}
Eliminating one representative of two locally equivalent canonical patterns (e.g., commuting gates or triplet sequences) is valid for \emph{order-insensitive} objectives (total gate count \eqref{eq:obj_gate_count}; phase-invariant fidelity \eqref{eq:exact-fid-obj}, \eqref{eq:oa-frob}), but generally \emph{invalid} for depth minimization objective in \eqref{eq:depth_mip:objective}, where gate order affects parallelism via \eqref{eq:depth_mip:mono}--\eqref{eq:depth_mip:disjoint}.

\emph{Counterexample.} Define $T_1 := T\otimes \mathbbm{1}_2,\qquad T_2 := \mathbbm{1}_2\otimes T,\qquad \mathrm{CNOT}_{1,2}\in \mathcal{U}(2)$ with (control on qubit 1, target on qubit 2). Since $T_1$ is diagonal and acts only on the control, $T_1$ commutes with $\mathrm{CNOT}_{1,2}$, hence
\[
T_1  \cdot \mathrm{CNOT}_{1,2}  \cdot T_2 = \mathrm{CNOT}_{1,2}  \cdot (T \otimes T).
\]
Without symmetry breaking, schedule $\mathrm{CNOT}_{1,2}$ at depth 1 and $(T \otimes T)$ in parallel at depth 2 is ``optimal''. If a symmetry-breaking rule enforces $T_1\prec \mathrm{CNOT}_{1,2}$, then $T_1$ is at depth 1, $\mathrm{CNOT}_{1,2}$ at depth 2, and $T_2$ must be at depth 3. Thus the enforced local symmetry elimination strictly increases optimal depth; such symmetry constraints are \emph{invalid} for depth minimization.
\end{remark}

\subsection{Redundancy-elimination Constraints}

To eliminate redundant gate sequences while maintaining computational tractability, we impose valid constraints exclusively for sequences of length \(2 \leqslant k \leqslant  5\). For a gate set \(\mathbb{G}\) and any sequence \((g_1, \dots, g_k) \in \mathbb{G}^k\) satisfying the algebraic identity \(g_1 \cdots g_k = g'\) for some \(g' \in \mathbb{G}\), the following constraint is enforced for all starting positions $p\leqslant  P-k+1$:  
\begin{align}
\sum_{i=1}^{k} z_{g_i,\, p+i-1} \leqslant  k-1, \quad \forall k \in [5] \setminus \{1\}.
\end{align}  
This prevents the inclusion of the redundant sequence \((g_1, \dots, g_k)\) at consecutive positions \(p\) to \(p+k-1\), as its effect is algebraically equivalent to the single gate \(g'\).

We consider $k\leqslant 5$ (up to quintuplets) to cover the standard minimal library $\{\mathrm{CNOT},H,T\}$, identities canonical relations such as control–target reversal $(H\otimes H) \cdot\,\mathrm{CNOT}_{1,2}\,\cdot (H\otimes H)=\mathrm{CNOT}_{2,1}$, $T^2=S$, $X^2=\mathbb{I}$, $X\cdot H \cdot Z=H$, $H_2\cdot\mathrm{CNOT}_{1,2} \cdot H_2=\mathrm{CZ}_{1,2}$, as well as cyclic relations of weaving operators in topological quantum computation with Fibonacci anyon models \cite{simon_topological_2006, rouabah_compiling_2021}.

To summarize, symmetry constraints eliminate equivalent optimal circuits by identifying equivalence classes  and restricting to a canonical representative, while redundancy constraints remove provably dominated solutions that correspond to non-minimal gate sequences implementing identical unitaries. The combined application of both constraint classes induces a valid restriction of the feasible region of discrete solutions that preserves solution optimality while substantially reducing the branch-and-bound search space through elimination of both symmetric solutions and dominated sequences.

\subsection{Hermitian-conjugate Constraints}
\label{subsec:hermitian}

The constraints introduced in this section—\emph{Hermitian-conjugate constraints (HC)}—do not exclude any integer-feasible compilation; they encode algebraic identities that every valid compiled circuit already satisfies. As noted in Section~\ref{subsec:milp}, MILP solvers proceed via LP relaxations in which gate-selection variables may take fractional values; at such fractional points these identities generally fail. Enforcing them therefore leaves the convex hull of integer solutions unchanged while cutting off infeasible fractional solutions, yielding a tighter LP relaxation. A stronger relaxation improves the root-node dual bound, reduces the integrality gap, and typically shrinks the branch-and-bound tree. Consequently, these identities serve as valid constraints that accelerate convergence.

The following Hermitian-conjugate \eqref{eq:hc_gamma} identity, applied near the end of the circuit, was found significantly effective for the circuit compilation problem as modeled in this paper. Given the recursion relation in \eqref{eq:G-hat} and the exact target relation \(\hat{G}_P=\mathcal{R}(\widetilde T)\), for any integer \(1\leqslant \gamma\leqslant P\),
\begin{equation}
\hat{G}_{P-\gamma}
=\mathcal{R}(\widetilde T)\left(\prod_{t=P-\gamma+1}^{P} G_t \right)^{\dagger}.
\tag{HC--$\gamma$}
\label{eq:hc_gamma}
\end{equation}
This follows directly from unitarity: right-multiplying by the Hermitian conjugate of the trailing product of gates yields the identity. 

In this work we enforce HC--1 (\(\gamma=1\)) and HC--2 (\(\gamma=2\)). Because the gate-selection constraint \eqref{eq:gate_select} chooses exactly one \(g\in\mathbb{G}\) at each position, we have
$
G_t^{\dagger} \;=\; \sum_{g\in\mathbb{G}} z_{g,t}\,\mathcal{R}(g)^{\dagger}.
$
Therefore HC--1 is linear in the decision variables \(z_{g,t}\). By contrast, HC--\(\gamma\) with \(\gamma\geqslant 2\)) involves products, which are bilinear in \(\{z_{g,t}\}\); depending on solver capabilities, these constraints can be imposed directly or via standard convexifications (see Section~\ref{subsec:linearization}). \\ \\

\medskip
\noindent
\textbf{Global-phase generalization of HC-1 identity} \\

\noindent
The Hermitian-conjugate conditions must be generalized to reflect the fundamental fact that unitaries are defined only up to a global phase, as in \eqref{eq:GP-condition}. Although this generalization is nontrivial for arbitrary \(\gamma\), we now present the Hermitian-conjugate identity HC--1, whose corresponding phase-invariant version is:
\begin{align}
\hat{G}_{P-1} \ = \ \mathcal{R}(r+is) \cdot \sum_{g\in\mathbb{G}} z_{g,P}\cdot\mathcal{R}(\widetilde{T} \cdot g^{\dagger}),
\label{eq:HC_global}    
\end{align}
where let \(\widetilde{T}\cdot g^{\dagger}=A_g+iB_g\) for fixed matrices \(A_g, B_g \in \mathbb{R}^{2^Q \times 2^Q}\) whose entries satisfy \(|(\widetilde{T}\cdot g^{\dagger})_{ij}| \leqslant 1\) since both $\widetilde{T}$ and $g^{\dagger}$ are unitary.

The product of \((r,s)\) with the binary variables \(\{z_{g,P}\}\) renders \eqref{eq:HC_global} \emph{nonlinear}. We eliminate this nonlinearity via an exact disjunctive reformulation. Let \(\mathbb{J}_{2^Q}\) denote the \(2^Q\times 2^Q\) all-ones matrix. Then the phase-sensitive condition in \eqref{eq:HC_global} is exactly equivalent to the entry-wise linear inequalities stated in the following Proposition \ref{prop:hc1_GP_claim}. Here \(\lvert\cdot\rvert\) denotes the matrix whose \((i,j)\)-th entry is the absolute value of the difference between the corresponding matrix entries, and \(\mathcal{R}^{-1}(\cdot)\) is the inverse real-to-complex map defined in Section \ref{subsec:real-encoding}.

\begin{proposition}
\label{prop:hc1_GP_claim}
Let \(\mathbb{G}\) be a finite set of unitary gates. Suppose \(r, s \in [-1,1]\) satisfy \(r^2 + s^2 = 1\), and \(|(\widetilde{T} \cdot g^{\dagger})_{ij}| \leqslant 1\) for all \(i,j\). Then the nonlinear constraint in \eqref{eq:HC_global} is equivalent to the set of element-wise linear inequalities for all \(g \in \mathbb{G}\):  
\begin{align*}
&\big|\operatorname{Re}\big(\mathcal{R}^{-1}(\hat{G}_{P-1})\big) - (r A_g - s B_g)\big| \leqslant 2\cdot (1 - z_{g,P})\cdot \mathbb{J}_{2^Q}, \\
&\big|\operatorname{Im}\big(\mathcal{R}^{-1}(\hat{G}_{P-1})\big) - (r B_g + s A_g)\big| \leqslant 2\cdot (1 - z_{g,P})\cdot \mathbb{J}_{2^Q},
\end{align*}  
\end{proposition}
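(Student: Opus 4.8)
The plan is to exploit the one-hot gate-selection constraint \eqref{eq:gate_select:onehot} (which is in force for the decision variables $z_{\cdot,P}$) to collapse the sum in \eqref{eq:HC_global} to a single active term, rewrite the resulting matrix identity through the real-encoding $*$-homomorphism, and then check that the big-$M$ inequalities with constant $2$ reproduce this identity exactly on the active gate while remaining slack for every inactive one. This is precisely the ``exact disjunctive reformulation'' alluded to before the statement.

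First I would observe that since $z_{g,P}\in\{0,1\}$ with $\sum_{g\in\mathbb{G}}z_{g,P}=1$, there is a unique $g^\star$ with $z_{g^\star,P}=1$ and $z_{g,P}=0$ for $g\neq g^\star$; hence $\sum_g z_{g,P}\,\mathcal{R}(\widetilde T g^\dagger)=\mathcal{R}(\widetilde T g^{\star\dagger})$, and by the homomorphism and global-phase identities \eqref{eq:calR-props}--\eqref{eq:calR-phase} the right-hand side of \eqref{eq:HC_global} equals $\mathcal{R}\big((r+is)\,\widetilde T g^{\star\dagger}\big)$. Because $\hat G_{P-1}$ is a product of matrices $\mathcal{R}(g)$, it lies in the image of $\mathcal{R}$, so applying $\mathcal{R}^{-1}$ (Section~\ref{subsec:real-encoding}) shows \eqref{eq:HC_global} is equivalent to the complex equality $\mathcal{R}^{-1}(\hat G_{P-1}) = (r+is)(A_{g^\star}+iB_{g^\star})$, i.e.\ after expanding the product, to the two real matrix equalities $\operatorname{Re}\mathcal{R}^{-1}(\hat G_{P-1}) = rA_{g^\star}-sB_{g^\star}$ and $\operatorname{Im}\mathcal{R}^{-1}(\hat G_{P-1}) = rB_{g^\star}+sA_{g^\star}$.

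For the forward direction I would note that at $g=g^\star$ the right-hand side $2(1-z_{g^\star,P})\mathbb{J}_{2^Q}$ is zero, so the claimed inequalities are exactly those two equalities and hold; at every $g\neq g^\star$ the right-hand side is $2\mathbb{J}_{2^Q}$, and I must show each entry of the left-hand side has modulus at most $2$. Entrywise, $|(rA_g-sB_g)_{ij}| = |\operatorname{Re}((r+is)(\widetilde T g^\dagger)_{ij})|\leqslant |r+is|\,|(\widetilde T g^\dagger)_{ij}|\leqslant 1$, using $|r+is|=1$ (equivalently $r^2+s^2=1$, cf.\ Prop.~\ref{prop:complex_magnitude}) and unitarity of $\widetilde T g^\dagger$; similarly $|(rB_g+sA_g)_{ij}|\leqslant 1$. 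Moreover $\hat G_{P-1}$ is orthogonal (a product of the orthogonal blocks $\mathcal{R}(g)$), so $\mathcal{R}^{-1}(\hat G_{P-1})$ is unitary and has entries of modulus $\leqslant 1$, whence $|\operatorname{Re}\mathcal{R}^{-1}(\hat G_{P-1})_{ij}|\leqslant 1$ and likewise for the imaginary part; the triangle inequality then bounds each difference by $2$. For the converse I would evaluate the linear inequalities at the unique active index $g^\star$, where the right-hand side vanishes and forces the two real equalities, hence the complex identity, hence \eqref{eq:HC_global} after applying $\mathcal{R}$ and running the reduction of the previous paragraph in reverse.

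The main obstacle — and the only place the hypotheses are genuinely needed — is pinning the big-$M$ constant to exactly $2$: it must be large enough that the inequality is vacuous whenever $z_{g,P}=0$ (so no integer-feasible compilation is excluded) yet no larger (so the relaxation stays tight). The estimate rests on $|r+is|=1$ — without the normalization $r^2+s^2=1$, $|r+is|$ could reach $\sqrt 2$ on $[-1,1]^2$ and break the bound — on the entrywise unitarity bound $|(\widetilde T g^\dagger)_{ij}|\leqslant 1$, and on the orthogonality of the partial product $\hat G_{P-1}$; the value $2$ is tight because one can have $\operatorname{Re}\mathcal{R}^{-1}(\hat G_{P-1})_{ij}=\pm1$ while $(rA_g-sB_g)_{ij}=\mp1$.
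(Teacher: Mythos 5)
Your proposal is correct and follows essentially the same route as the paper's proof: collapse the sum via the one-hot selection of a unique active gate $g^\star$, translate \eqref{eq:HC_global} into the two real matrix equalities through the $*$-homomorphism $\mathcal{R}$, observe that the right-hand side vanishes for $g^\star$ (forcing equality) and equals $2\mathbb{J}_{2^Q}$ for inactive gates, where the entrywise bounds $|r+is|=1$ and $|(\widetilde{T}g^\dagger)_{ij}|\leqslant 1$ make the inequalities vacuous. The only cosmetic difference is that you justify the bound on $\mathcal{R}^{-1}(\hat G_{P-1})$ via orthogonality of the partial product, whereas the paper leaves that entry bound implicit; your added remark on the tightness of the constant $2$ is a nice bonus not present in the paper.
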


\begin{proof}
Assume first that the nonlinear constraint \eqref{eq:HC_global} is valid. Then there exists a unique \(g^* \in \mathbb{G}\) with \(z_{g^*,P} = 1\) and \(z_{g,P} = 0\) for \(g \ne g^*\), so \(\hat{G}_{P-1} = \mathcal{R}(r+is) \cdot \mathcal{R}(\widetilde{T} \cdot (g^*)^{\dagger})\). By the properties of the real encoding \(\mathcal{R}\) (from Section \ref{subsec:real-encoding}), this implies \(\mathcal{R}^{-1}(\hat{G}_{P-1}) = (r+is)(\widetilde{T} \cdot (g^*)^{\dagger})\), and thus \(\operatorname{Re}(\mathcal{R}^{-1}(\hat{G}_{P-1})) = r A_{g^*} - s B_{g^*}\) and \(\operatorname{Im}(\mathcal{R}^{-1}(\hat{G}_{P-1})) = r B_{g^*} + s A_{g^*}\). For \(g = g^*\), the right-hand sides of the linear constraints vanish, and the aforementioned equalities hold exactly. For \(g \ne g^*\), \(z_{g,P} = 0\) gives a right-hand side of \(2 \cdot \mathbb{J}_{2^Q}\). Since \(|(\widetilde{T} \cdot g^{\dagger})_{ij}| \leqslant 1\) and \(r^2 + s^2 = 1\) is implied due to Proposition \ref{prop:complex_magnitude}, the modulus of each entry of \((r+is)(\widetilde{T} \cdot g^{\dagger})\) is bounded by 1. Thus, both \(\mathcal{R}^{-1}(\hat{G}_{P-1})\) and \((r+is)(\widetilde{T} \cdot g^{\dagger})\) have real and imaginary parts with entries in \([-1,1]\), making the element-wise differences bounded by 2. Conversely, if the linear constraints hold with a unique \(g^*\) having \(z_{g^*,P} = 1\), then for \(g^*\) the constraints enforce exact equality, recovering the nonlinear constraint, while for other \(g\) the constraints are automatically satisfied due to the 2-bound. Thus, the reformulation is exact.
\end{proof}

\begin{figure*}[t]
\centering
\includegraphics[width=\textwidth]{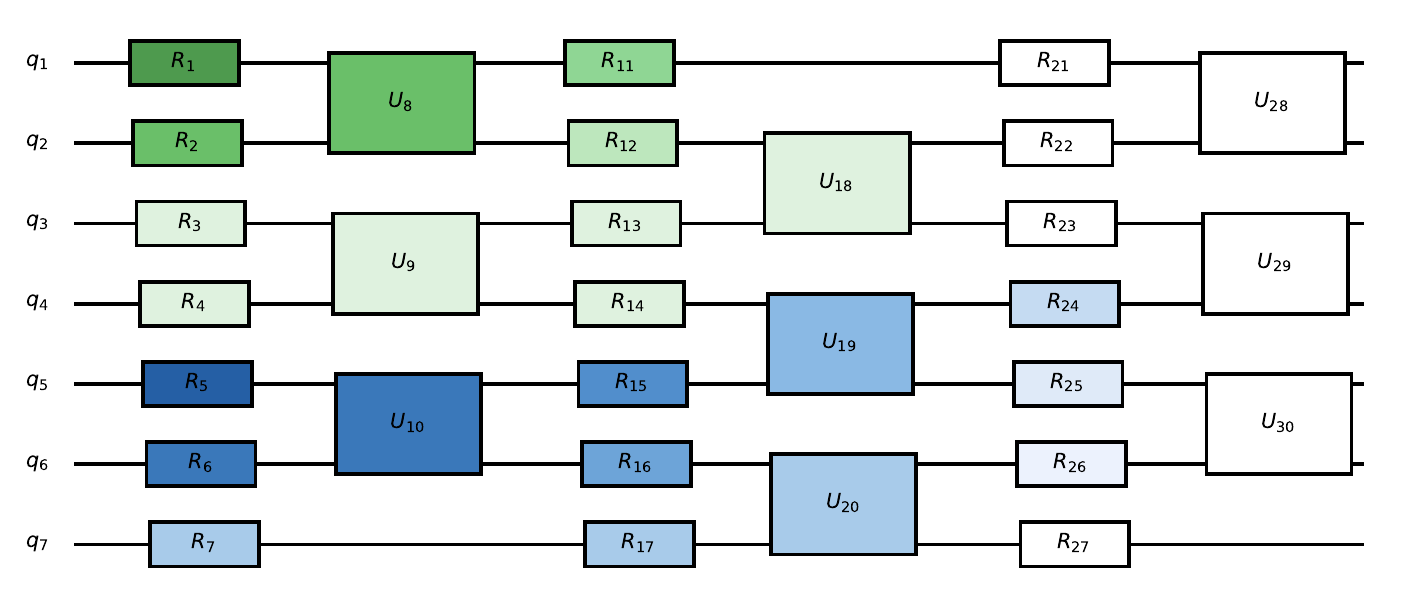}
\caption{Hardware-efficient brickwork ansatz on $7$ qubits (cf.\ \cite{kandala_hardware-efficient_2017,leone_practical_2024}), illustrating the identification of optimization blocks in the rolling-horizon optimization (RHO). Single- and two-qubit gates are slightly offset horizontally to indicate that, while typically parallel, any fixed virtual time ordering is admissible; for exposition, we adopt a top-to-bottom order. Green shading denotes the gate insertion order in the \emph{first} rolling-horizon window (dark$\to$light); blue shading analogously denotes the \emph{second} window.}
\label{fig:rho_brickwork}
\end{figure*}

\section{Rolling-Horizon Optimization Framework}

The computational complexity of exact quantum circuit synthesis via MILP exhibits exponential dependence on qubit count $Q$, with formulation size scaling as $\Theta(4^Q)$. This inherent intractability renders global optimization impractical beyond modestly sized circuits, despite the inclusion of valid strengthening constraints above.

To address this complexity barrier while preserving solution quality, \texttt{QCOpt} employs a rolling-horizon optimization (RHO) wrapper for the specific task of gate-count minimization. The global problem is decomposed into a sequence of windows (horizons) of subproblems that are solved independently (with a potential for parallelization); their solutions are then concatenated to yield a hardware-feasible, near-optimal circuit. This improves tractability at the cost of forfeiting global optimality guarantees.

Traditional rolling-horizon methods advance strictly along the circuit’s time axis, optimizing gates in chronological order. Our RHO generalizes this by rolling the horizon over subsets of qubits (with short temporal windows), thereby partitioning both temporal and spatial dimensions and reducing complexity.

RHO in our setting presupposes an existing (typically suboptimal) target circuit to refine. That is, given an initial gate list that implements the full target $\widetilde T$ (e.g., from a heuristic transpiler), we do not synthesize from scratch; instead, we iteratively improve contiguous subcircuits while preserving overall functionality. The two design levers are the \emph{window–selection policy} (how we choose the next block to optimize) and the \emph{acceptance policy} (how much of the optimized block we lock in before rolling forward). In practice, the bottleneck step is selecting the first optimization block: this choice determines how the horizon will roll and how quickly the qubit set grows.

After optimizing the first block, we accept only an initial prefix of that optimized window (the acceptance window) and then retarget the remaining problem by effectively left–multiplying with the Hermitian-conjugate of what we accepted. If $B_{\text{acc}}$ denotes the product of the accepted gates (in forward time order), the full target is updated as
\begin{equation}
\widetilde T \;\leftarrow\; B_{\text{acc}}^{\dagger}\, \widetilde T.
\end{equation}
For example, if in the first iteration we accept a single $\mathrm{CNOT}_{1,2}$, then the new full target becomes $\mathrm{CNOT}_{1,2} \cdot \widetilde T$. Setting the acceptance window equal to the optimization window eliminates overlap between chunks but tends to hurt quality; accepting only a minimal prefix improves quality at the expense of more iterations.

The first block we seek should satisfy several properties. (i) It must involve at most $\texttt{max\_qubits}$ distinct qubits to keep the effective Hilbert space small. (ii) Its length must not exceed $\texttt{window\_length}$. (iii) It should \emph{roll primarily in time}: we extend the window chronologically and only enlarge the qubit set when forced to by multi–qubit gates, thereby avoiding unnecessary growth in subsystem size. (iv) It should satisfy a \emph{closure} condition on included qubits: once a gate on qubit $q$ enters the window up to index $i$, all gates acting on $q$ that occur before $i$ must also be included. This prevents inconsistent partial histories on any involved qubit and yields well–posed subproblems.

Operationally, our RHO policy: finds the earliest feasible block that obeys the cardinality bound on qubits, the length bound on gates, the temporal preference (extend in time first, add qubits only when necessary), and the per–qubit closure rule; optimizes that block; accepts a prefix; retargets the remainder; and repeats. This policy balances stability (small qubit footprint with preserved local context) with progress (steady forward motion in time) and underlies the routines with four key functions described in detail below:

\paragraph{\texttt{gates\_on\_qubits\_up\_to}} This function takes a target gate sequence $t$, a set of qubits $q$, and an index $i$. It selects all gates from the beginning of the sequence up to position $i$ that act on any qubit in $q$. The function iterates through the sequence, checking the qubits each gate acts upon, and appends matching gates to a list.

\begin{figure}[htbp]  
\centering
\begin{minipage}{\columnwidth}
\begin{algorithm}[H]
\caption*{\textsc{gates\_on\_qubits\_up\_to}(\texttt{t}, \texttt{q}, \texttt{i})}
\begin{algorithmic}[1]
\State \textbf{Input:} gate list, qubit set, index
\State \textbf{Output:} gates acting on  \texttt{q} up to $\texttt{t}[\texttt{i}]$
\State \texttt{selected\_gates} $\gets$ [ ]
\For{$\texttt{j} = 1$ to \texttt{i}}
    \State \texttt{gate} $\gets \texttt{t}[\texttt{j}]$
    \State \texttt{gate\_qubits} $\gets$ \Call{ExtractQubits}{\texttt{gate}}
    \If{\Call{Intersection}{\texttt{gate\_qubits}, \texttt{q}} $\neq \emptyset$}
        \State \Call{Append}{\texttt{selected\_gates}, \texttt{gate}}
    \EndIf
\EndFor
\State \Return \texttt{selected\_gates}
\end{algorithmic}
\end{algorithm}
\end{minipage}
\end{figure}

\paragraph{\texttt{recursive\_gates\_on\_qubits\_up\_to}} Building on the previous function, this recursively expands the set of selected qubits. Initially calling \texttt{gates\_on\_qubits\_up\_to}, it updates the qubit set $q$ based on qubits involved in selected gates and repeats until no new qubits are found in order to fulfill the closure condition.

\begin{figure}[htbp]
\centering
\begin{minipage}{\columnwidth}
\begin{algorithm}[H]  
\caption*{\textsc{recursive\_gates\_on\_qubits\_up\_to}(\texttt{t}, \texttt{q}, \texttt{i})}  
\begin{algorithmic}[1]
\State \textbf{Input:} gate list, qubit set, index
\State \textbf{Output:} extended gate list acting on qubits up to $\texttt{t}[\texttt{i}]$
\State \texttt{selected\_gates} $\gets$ \Call{gates\_on\_qubits\_up\_to}{\texttt{t}, \texttt{q}, \texttt{i}}
\State \texttt{new\_q} $\gets$ \Call{ExtractQubits}{\texttt{selected\_gates}}
\If{$\Call{length}{\texttt{new\_q}} > \Call{length}{\texttt{q}}$}
    \State \Return \Call{recursive\_gates\_on\_qubits\_up\_to}{\texttt{t}, \texttt{new\_q}, \texttt{i}}
\Else
    \State \Return \texttt{selected\_gates}
\EndIf
\end{algorithmic}
\end{algorithm}
\end{minipage}
\end{figure}

\paragraph{\texttt{find\_first\_block}} Identifies the first feasible block of gates to optimize, constrained by a specified window length and maximum qubit count. It incrementally evaluates sequences until one exceeds the constraints, returning the largest feasible block found.

\begin{figure}[htbp]
\centering
\begin{minipage}{\columnwidth}
\begin{algorithm}[H]
\caption*{\textsc{find\_first\_block}(\texttt{t}, \texttt{window\_length}, \texttt{max\_qubits})}
\begin{algorithmic}[1]
\State \textbf{Input:} target gate sequence, window size, max number of qubits in a window
\State \textbf{Output:} best gate block satisfying constraints
\State $\texttt{q} \gets \Call{ExtractQubits}{\texttt{t}[1]}$
\State $\texttt{i} \gets 1$
\State \texttt{best\_sequence} $\gets$ [ ]
\While{true}

    \State \texttt{seq} $\gets$ \Call{recursive\_gates\_on\_qubits\_up\_to}{\texttt{t}, \texttt{q}, \texttt{i}} 
     \State \texttt{q} $\gets$ \Call{ExtractQubits}{\texttt{seq}}
    \State cond1 $\gets \texttt{i} > \Call{Length}{\texttt{t}}$
    \State cond2 $\gets \Call{Length}{\texttt{seq}} > \texttt{window\_length}$
    \State cond3 $\gets \Call{Length}{\texttt{q}} > \texttt{max\_qubits}$
    
    \If{cond1 \textbf{ or } cond2 \textbf{ or } cond3}
    \State \Return \texttt{best\_sequence}

    \ElsIf{$\Call{Length}{\texttt{seq}} = \texttt{window\_length}$}
        \State \Return \texttt{seq}
    \Else
        \State \texttt{best\_sequence} $\gets$ \texttt{seq}
        \State $i \gets i + 1$
    \EndIf
\EndWhile
\end{algorithmic}
\end{algorithm}
\end{minipage}
\end{figure}

\paragraph{\texttt{rolling\_horizon}} This function drives optimization by sequentially selecting and optimizing blocks, updating the gate sequence until all gates are processed.

\begin{figure*}[htbp]
\centering
\begin{minipage}{\textwidth}
\begin{algorithm}[H]
\caption*{\textsc{rolling\_horizon}($\texttt{target\_sequence}$, \texttt{window\_length}, \texttt{max\_qubits}, \texttt{elementary\_gates}, \texttt{accept\_window}, \texttt{optimizer})}
\begin{algorithmic}[1]
\State \textbf{Input:} target gate sequence, window size, max number of qubits in a window, elementary gate set, optimizer, accept window size
\State \textbf{Output:} optimized gate sequence
\State \texttt{output\_gates} $\gets$ [ ]
\State \texttt{full\_target} $\gets$ \texttt{target\_sequence}
\While{\texttt{full\_target} $\neq \emptyset$}
    \State \texttt{current\_target} $\gets$ \Call{find\_first\_block}{\texttt{full\_target}, \texttt{window\_length}, \texttt{max\_qubits}}
    \State \texttt{current\_qubits} $\gets$ \Call{ExtractQubits}{\texttt{current\_target}}
    \State \texttt{full\_target} $\gets$ \Call{Remove}{\texttt{full\_target}, \texttt{current\_target}}
    \If{$\Call{Length}{\texttt{current\_qubits}} \leqslant  1$ \textbf{or} $\Call{Length}{\texttt{current\_target}} = 1$ }
        \State \Call{Append}{\texttt{output\_gates}, \texttt{current\_target}}
    \Else
        \State \texttt{optimized} $\gets$ \Call{\texttt{QCOpt}}{\texttt{current\_target}, \texttt{elementary\_gates}, \texttt{optimizer}}

        \If{$\Call{Length}{\texttt{full\_target}} = 0$}
            \State \Call{Append}{\texttt{output\_gates}, \texttt{optimized}[:]}
        \Else
            \If{$\Call{Length}{\texttt{optimized}} >= \texttt{accept\_window}$}
                \State \Call{Append}{\texttt{output\_gates}, \texttt{optimized[:\texttt{accept\_window}]}}
                \State \texttt{full\_target} $\gets$ \Call{Append}{\texttt{optimized[\texttt{accept\_window}+1:]}, \texttt{full\_target}}
            \Else
                \State \Call{Append}{\texttt{output\_gates}, \texttt{optimized}}
            \EndIf
        \EndIf
    \EndIf
\EndWhile
\State \Return \texttt{output\_gates}
\end{algorithmic}
\end{algorithm}
\end{minipage}
\end{figure*}

To illustrate how the first block is found under rolling horizon, we use the 7-qubit hardware-efficient ansatz in Fig.~\ref{fig:rho_brickwork} (cf.~\cite{kandala_hardware-efficient_2017,leone_practical_2024}). Although single-qubit rotations are often treated as parallel, in practice, any fixed virtual time ordering is admissible; in the figure, we intentionally offset them slightly to make this explicit. In this example, we adopt a simple acceptance policy—accept the entire optimized window—and set the window length to $12$ and per-window qubit cap to $4$.

The search for the first block begins at the earliest gate, $R_{1}$ on qubit 1 (dark green). The next gate on that qubit is $U_{8}$; inserting $U_{8}$ triggers the closure condition, which simultaneously pulls in $R_{2}$, so both enter the window. The algorithm then adds $R_{11}$ and $R_{12}$ (now 5 gates). The next candidate is $U_{18}$, but closure requires adding $R_{3},R_{4},U_{9},R_{13},R_{14}$, bringing the total to 11. The subsequent candidate $U_{19}$—together with $R_{5},R_{6},U_{10},R_{15},R_{16}$ required by closure—would exceed both the qubit cap ($>4$) and the window length ($>12$), so these additions are rejected. The algorithm therefore optimizes the valid 11-gate window by forming the corresponding target and solving the exact MILP to minimize gate count. In Fig.~\ref{fig:rho_brickwork}, progressively lighter shades of green indicate the order in which gates enter this first window.

Because the acceptance policy fixes each optimized window, that portion of the target circuit is removed, and the search moves to the next “first block.” It resumes at $R_{5}$, now the earliest remaining gate. The second block is assembled as follows: $U_{10}$ and $R_{6}$ are added (closure), then $R_{15}$ and $R_{16}$; adding $U_{19}$ enlarges the active-qubit set to qubits 4–6, and adding $U_{20}$ pulls in $R_{7}$ and $R_{17}$, activating qubit 7. Subsequent rounds add $R_{24}$, $R_{25}$, and $R_{26}$. When the window hits its 12-gate cap, only those gates are optimized. The construction of this second block appears in progressively lighter shades of blue in Fig.~\ref{fig:rho_brickwork}. This simple example makes the rolling-horizon mechanics concrete: select the earliest feasible block (respecting length, qubit cardinality, and per-qubit closure), optimize, accept, retarget, and repeat.

\section{Numerical Experiments}
\label{sec:examples}

All MILP/MIQP formulations and the rolling-horizon algorithm are implemented in the open-source package \texttt{QuantumCircuitOpt.jl}. We omit usage details here; complete documentation and a user guide are available at \url{https://github.com/harshangrjn/QuantumCircuitOpt.jl}. The illustrative examples presented in this section demonstrate typical transpilation use cases and provide empirical evidence motivating several implementation choices. All experiments were run on a single machine with the following setup: \emph{CPU: \texttt{AMD EPYC 7R13} (32 cores / 64 threads, base 2.56\,GHz); RAM: \texttt{128\,GB}; OS: \texttt{Windows Server 2022 (21H2)}; solver: \texttt{Gurobi v12.0.2} (default settings), threads: \texttt{32}; Julia: \texttt{1.11.5}}.

Unless stated otherwise, all runs use the total gate-count objective \eqref{eq:obj_gate_count} with the baseline MILP constraints \eqref{eq:gate_select}–\eqref{eq:G-hat}. We denote by ``best-MILP'' the baseline augmented with all additional constraints—symmetry, redundancy–elimination, and the Hermitian-conjugate (HC-1) constraints (see Sec. \ref{sec:valid_constraints}). Reported run-time statistics may vary with hardware (e.g., core count) due to the parallelism of commercial MIP solvers (e.g., Gurobi) and may also depend on the solver choice. Further details on the number of qubits, elementary gate set size, and the imposed maximum circuit length for each example are provided in Appendix~\ref{app:example-params}.

\subsection{Performance of Global-Phase Conditions}\label{subsec:gp-expts}

In Section~\ref{sec:math_form} we introduced the global-phase equivalence target condition (Eq.~\eqref{eq:GP-condition}), which allows $\phi \in [0, 2\pi)$. We show the implementation of this formulation using the total gate-count objective under the ``best-MILP'' configuration, with Hermitian-conjugate (HC-1) constraints implemented in the global-phase perspective, as described in Proposition \ref{prop:hc1_GP_claim}. Table~\ref{tab:gp-compare-reduced} reports wall-clock times; the ``Exact'' column corresponds to the strict $\phi=0$ condition (Eq.~\eqref{eq:exact_target}). The exact condition is, on average, 28.3\% faster than global-phase equivalence. The gap likely stems from the larger feasible region in the global-phase formulation, which introduces combinatorial symmetries and increases branch-and-bound depth, degrading MILP solver performance.
Nevertheless, when all elementary gates are encoded in $\mathcal{SU}(2^Q)$, the exact target condition (Eq.~\eqref{eq:exact_target}) mostly outperforms the global-phase representation, and we therefore adopt it for all subsequent examples.

\begin{table}[t]
\centering
\small
\caption{Execution time (seconds) for two separate conditions: global-phase equivalence ($\phi \in [0, 2\pi)$) vs. exact ($\phi=0$).}
\label{tab:gp-compare-reduced}
\begin{tabular}{lrr}
\toprule
Target & Global-phase (s) & Exact (s) \\
\midrule
Controlled-$\sqrt{\mathrm{X}}$                              & 17.01    & 12.93 \\
Controlled-Hadamard                                         & 0.46  & 3.92  \\
Magic\footnotemark~\cite{vatan_optimal_2004}                & 10.26   & 3.35   \\
iSwap                                                       & 8.57  & 2.39  \\
single excitation Hadamard~\cite{arrazola_universal_2022}   & 8.30   & 5.26  \\
\midrule
$\mathrm{CNOT}_{1,3}$                                       & 16.52 & 2.34  \\
Fredkin~\cite{smolin_five_1996}                             & 41.58 & 43.30  \\
Miller~\cite{hung_optimal_2006}                             & 71.09 & 117.7 \\
Relative Toffoli~\cite{maslov_advantages_2016}              & 80.55 & 18.31 \\
Margolus~\cite{maslov_advantages_2016}                      & 26.62 & 13.72 \\
\midrule
$\mathrm{CNOT}_{4,1}$                                       & 31.99 & 28.46 \\
Double Peres~\cite{peres_reversible_1985}                   & 12.96 & 18.67 \\
Quantum Full Adder~\cite{hung_optimal_2006}                 & 203.64  & 153.6 \\
Double Toffoli~\cite{hung_optimal_2006}                     & 150.87  & 117.64 \\
\midrule
Average & 48.6& 38.7 \\
\bottomrule
\end{tabular}
\end{table}

\footnotetext{For the \emph{Magic} instance, due to the large elementary-gate set, we enabled redundancy–elimination only up to pairs; enumerating all redundant triplets would already yield $\sim\!1.3\times 10^{5}$ combinations. We recommend this option whenever the elementary gate set is very large (e.g., when using a fine–grained discretization of general single–qubit rotations).}

\subsection{Speed-ups from Valid Constraints}\label{subsec:speedups}
Having identified the target constraint to use going forward, we now evaluate the impact of the valid speed-up constraints. We begin with those tied directly to the target relation, namely the Hermitian-conjugate constraints: a linear HC--1 form and a quadratic MIQP variant, HC--2, as presented in Section \ref{subsec:hermitian}. Table~\ref{tab:mat-constraints} shows that these constraints can substantially reduce runtime—often by factors of two to five—though the magnitude depends strongly on the target.

Each run (including the \emph{base} column) uses total gate-count optimization with only the necessary baseline constraints (conditions \eqref{eq:gate_select} to \eqref{eq:G-hat}) plus the identity-gate symmetry constraint~\eqref{eq:IGS-constraint}, which we regard as fundamental. The subsequent columns add only the named constraint family to that baseline. The \emph{best MILP} column enables all three speed-up families simultaneously, namely redundancies, equivalent pairs, and HC 1 as well.

\begin{table*}[t]
\centering
\caption{Efficacy of valid constraints on wall-clock runtime (seconds) across circuit decompositions. Speedup is reported relative to the base run (values $>1$ indicate speedup). “best MILP” = base $+$ HC--1 $+$ redundancy $+$ equivalent-pair constraints.}
\label{tab:mat-constraints}
\footnotesize
\setlength{\tabcolsep}{4pt}
\renewcommand{\arraystretch}{1.12}
\resizebox{\textwidth}{!}{%
\begin{tabular}{lrrrrrrrrrrr}
\toprule
& \multicolumn{1}{c}{base} & \multicolumn{2}{c}{base + HC--1} & \multicolumn{2}{c}{base + HC--1 + HC--2} & \multicolumn{2}{c}{base + redundancies} & \multicolumn{2}{c}{base + equivalent pairs} & \multicolumn{2}{c}{best MILP} \\
\cmidrule(lr){2-2}\cmidrule(lr){3-4}\cmidrule(lr){5-6}\cmidrule(lr){7-8}\cmidrule(lr){9-10}\cmidrule(lr){11-12}
Target & time (s) & time (s) & speed-up & time (s) & speed-up & time (s) & speed-up & time (s) & speed-up & time (s) & speed-up \\
\midrule
Toffoli (with 2-qubit gates)~\cite{hung_optimal_2006} & 10.78   & 12.36   & 0.9x &  5.64   & 1.9x & 10.22   & 1.1x &  4.10   & 2.6x &  2.81   &  3.8x \\
$\mathrm{CNOT}_{1,3}$      & 101.39  & 52.53   & 1.9x & 31.90   & 3.2x & 25.03   & 4.1x &  9.19   &11.0x &  2.34   & 43.3x \\
Fredkin                    & 381.45  & 389.53  & 1.0x & 212.70  & 1.8x & 259.21  & 1.5x & 160.87  & 2.4x & 40.13   &  9.5x \\
Miller                     & 138.31  & 138.88  & 1.0x & 108.98  & 1.3x & 134.30  & 1.0x & 180.10  & 0.8x & 117.70  &  1.2x \\
Relative Toffoli           & 182.30  & 94.99   & 1.9x & 67.15   & 2.7x & 76.79   & 2.4x & 68.71   & 2.7x & 18.31   & 10.0x \\
Margolus                   & 294.27  & 87.85   & 3.3x & 124.73  & 2.4x & 243.73  & 1.2x & 37.42   & 7.9x & 13.72   & 21.4x \\
Quantum Fourier Transform  & 481.91  & 279.58  & 1.7x & 89.62   & 5.4x & 202.19  & 2.4x & 321.47  & 1.5x & 72.50   &  6.6x \\
Controlled-iSwap~\cite{rasmussen_simple_2020}         & 52090.81& 16008.00& 3.3x & 14501.24& 3.6x & 7447.50 & 7.0x & 10399.80& 5.0x & 1367.02 & 38.1x \\
\midrule
Double Peres               & 116.73  & 91.58   & 1.3x & 62.87   & 1.9x & 121.49  & 1.0x & 44.50   & 2.6x & 18.67   &  6.3x \\
Quantum Full Adder         & 2348.01 & 348.47  & 6.7x & 231.44  &10.1x & 1054.85 & 2.2x & 231.06  &10.2x & 153.60  & 15.3x \\
Double Toffoli             & 1583.08 & 1020.68 & 1.6x & 315.30  & 5.0x & 986.62  & 1.6x & 179.38  & 8.8x & 117.64  & 13.5x \\
\bottomrule
\end{tabular}%
}
\end{table*}

A few takeaways are worth highlighting. First, the pruning conditions (redundancies and equivalent pairs) often help the solver dramatically—up to $11\times$ on their own (see $\mathrm{CNOT}_{1,3}$ with equivalent pairs). Second, in 5 of the 11 targets, enabling all three speed-up families produces a combined gain that is larger than what one might expect from simply compounding their individual effects. For example, for the Fredkin gate, a naive expectation from individual columns suggests only about a $3.4\times$ improvement, yet the full combination delivers $9.5\times$. Similarly, for the Double Peres gate, the individual contributions amount to about $3.2\times$, while the combined run achieves $6.3\times$. Finally, across our test set, these three families together yield a maximum observed improvement of $43\times$ (for $\mathrm{CNOT}_{1,3}$). 

These speedups have a practical consequence — as seen in the Controlled-iSwap case — because without these constraints, the raw MILP runtimes make our approach essentially infeasible, whereas with them, the problem becomes tractable, broadening the set of target gates for which the package could be genuinely useful.

\subsection{Circuit Depth Minimization}\label{subsec:circuit-depth-opt}
Having evaluated the target constraints and the benefits of the three families of speed-ups, we now illustrate with a few numerical examples how \texttt{QCOpt} tackles \emph{depth optimization} directly, using a gate set similar to the one employed above. All runs use our best available constraints: the base formulation \eqref{eq:gate_select}–\eqref{eq:depth_mip} together with the exact-target constraint \eqref{eq:exact_target}, plus all speed-up constraints—redundancy constraints and the Hermitian-conjugate constraints (HC1, HC2). From the symmetry family, we include only the identity-gate symmetry \eqref{eq:IGS-constraint}; the equivalent-pair symmetry in \eqref{eq:commuting_pair} is not a valid constraint for depth optimization, as noted in Remark \ref{rem:depth_no_commute_cons}.

\begin{table}[]
   \centering
   \caption{Solver times for the task of circuit depth optimization.}
   \label{tab:direct-depth}
   \begin{tabular}{lr} 
      \toprule
      \textbf{Target} & \textbf{Solve time [s]} \\
      \midrule
      Toffoli (with 2-qubit gates) & 1.48 \\
      $\mathrm{CNOT}_{1,3}$ & 770.66 \\
      Fredkin & 82.60 \\
      Miller & 114.50 \\
      Relative Toffoli & 123.13 \\
      Margolus & 208.10 \\
      Quantum Fourier Transform & 893.79 \\
      \midrule
      Double Peres & 23.97 \\
      Quantum Full Adder & 232.60 \\
      Double Toffoli & 375.91 \\
      \bottomrule
   \end{tabular}
\end{table}

We observe that running the direct depth-minimization MIP typically requires substantially more time than gate-count optimization with this formalism, which matches expectations: depth minimization is both a harder objective and it introduces more decision variables and linking constraints than the simpler gate-budget formulation.

\subsection{Approximate Compilation with Fibonacci Anyons}\label{subsec:approx-fib-new}
We tested our approximate formalism on a topological model — computation with Fibonacci anyons — where two closely related gate pictures exist: braiding (the braid generators $\sigma_1,\sigma_2$) and weaving (their powers, typically $\sigma_1^2,\sigma_2^2$). For single-qubit compilation, one uses three anyons, yielding a two-dimensional fusion space acted on by two braiding operators. Let \(\varphi \coloneqq \tfrac{1+\sqrt{5}}{2}\) be the golden ratio.
Within an \(\mathcal{SU}(2)\) representation, the braiding operators can be written compactly as
\[
\begin{aligned}
\sigma_1 &= e^{i\pi/10}\, R, \qquad \sigma_2 = e^{i\pi/10}\, F R F, \quad \text{where} \\
R &= 
\begin{pmatrix}
e^{-4 i\pi/5} & 0\\
0 & e^{3 i\pi/5}
\end{pmatrix}, \quad
F = 
\begin{pmatrix}
\varphi^{-1} & \varphi^{-1/2}\\
\varphi^{-1/2} & -\varphi^{-1}
\end{pmatrix}.
\end{aligned}
\]

The weaving operators are $W_1=\sigma_1^{2}$ and $W_2=\sigma_2^{2}$, and as the braiding ones, they do not commute ($[W_1,W_2]\neq0$). Sequences drawn from $\{\sigma_1^{\pm1},\sigma_2^{\pm1}\}$ (or equivalently from their weaves and complex conjugates) are dense in $\mathcal{SU}(2)$, so any single-qubit unitary can be approximated to arbitrary precision. Prior work has largely searched for the best short sequences by brute force: for instance, a depth-15 weave achieving a Hadamard has reported fidelity $0.999957$~\cite{rouabah_compiling_2021}. To mirror approximate compilation methods in Section~\ref{subsec:approx}, we benchmark three objectives—ordered by increasing modeling tractability—on the depth-15 instances for the Hadamard, X, and $T$ gates; results appear in Table~\ref{tab:approx-fib-3targets}.

\begin{table*}[t]
\centering
\caption{Approximate-compilation results for depth-15 Fibonacci weaves across three targets: the number of outer approximators is fixed \(K=5\), varying the entrywise error bound \(\epsilon\) on the complex error matrix \(E\) (i.e., \(|E_{ij}|\leqslant \epsilon\)).}
\label{tab:approx-fib-3targets}
\footnotesize
\setlength{\tabcolsep}{6pt}
\renewcommand{\arraystretch}{1.12}
\begin{tabular}{r r r r r r r}
\toprule
\multicolumn{1}{c}{\textbf{Method}} &
\multicolumn{2}{c}{\textbf{$\bm{H}$-gate}} &
\multicolumn{2}{c}{\textbf{$\bm{X}$-gate}} &
\multicolumn{2}{c}{\textbf{$\bm{T}$-gate}} \\
\cmidrule(lr){2-3}\cmidrule(lr){4-5}\cmidrule(lr){6-7}
 & \multicolumn{1}{c}{\textbf{time (s)}} & \multicolumn{1}{c}{\textbf{fidelity}} &
 \multicolumn{1}{c}{\textbf{time (s)}} & \multicolumn{1}{c}{\textbf{fidelity}} &
 \multicolumn{1}{c}{\textbf{time (s)}} & \multicolumn{1}{c}{\textbf{fidelity}} \\
\midrule
outer-approximation ($\epsilon=1$):      & 1.06  & 0.945974 & 1.38  & 0.932852 & 1.01   & 0.973666 \\
outer-approximation ($\epsilon=1/2$):  & 1.83  & 0.987268 & 1.68  & 0.985019 & 1.49   & 0.979530 \\
outer-approximation ($\epsilon=1/4$):  & 6.86  & 0.992066 & 4.49  & 0.995352 & 9.67   & 0.998256 \\
outer-approximation ($\epsilon=1/8$):  & 8.25  & \textbf{0.999957} & 20.42 & \textbf{0.999990} & 13.29  & 0.998153 \\
outer-approximation ($\epsilon=1/16$): & 98.46 & 0.999921 & 56.12 & 0.999990 & 846.51 & \textbf{0.999739} \\
\midrule
linearized fidelity (MILP):                      & 217.66 & 0.999921 & 76.99  & 0.999990 & 30293  & 0.999739 \\
exact fidelity (MIQP):                    & 198.41 & 0.999957 & 191.06 & 0.999990 & 6897.8 & 0.999917 \\
\bottomrule
\end{tabular}
\end{table*}

First, we solve the exact, phase-invariant fidelity MIQP \eqref{eq:exact-fid-obj}, maximizing ($F(\cdot)=\alpha^{2}+\beta^{2}$). This non-convex objective is computationally heavier but yields certificates of optimality. On the depth-15 Fibonacci-anyon benchmarks, the MIQP recovers the best-known Hadamard and (X)-gate weaves within 200 s, substantially faster than exhaustive search (the only alternative that, to our knowledge, guarantees optimality but scales exponentially). Although mixed-integer optimization is NP-hard in the worst case, these results show that a carefully tailored model with valid constraints can solve practically relevant instances; the (T)-gate remains harder ($\approx$ 6900 s), motivating surrogate objectives. The MIQP’s certified fidelities serve as upper bounds for the surrogate objectives evaluated below.

Second, we evaluate the linearized fidelity surrogate \eqref{eq:lin-fid-obj-again}, which maximizes the real trace overlap ($\alpha$) (a linear objective). When the phase quadrature ($\beta\approx 0$), maximizing ($\alpha$) is equivalent to maximizing fidelity, and indeed it matches the MIQP on the (X)-gate and is near-optimal on the others. However, its runtimes are not uniformly better—and degrade notably for the ($T$) target (Table~\ref{tab:approx-fib-3targets})—motivating a third, more scalable objective.

Third, we evaluate the Frobenius‐error objective \eqref{eq:oa-frob}, realized via the linear outer approximation \eqref{eq:oa-tangents} using $K$ tangents per error term on a uniform grid in $[-\epsilon,\epsilon]$. This yields an MILP. By Prop.~\ref{prop:objective-equivalence}, minimizing $\|E\|_F^2$ is exactly equivalent to maximizing $\alpha$, while offering a significant run time advantage. With $K=5$ tangents per entry, decreasing $\epsilon$ tightens the outer approximation; runtimes increase monotonically while fidelities typically improve (Table~\ref{tab:approx-fib-3targets}). At $\epsilon=\tfrac{1}{8}$, the surrogate reaches $F=0.999957$ for the Hadamard in $8.25$~s and $F=0.999990$ for the X-gate in $20.42$~s, essentially matching the MIQP certified optima at a fraction of the time. For the $T$-gate, the surrogate yields high-quality but suboptimal fidelity—e.g., $F=0.999739$ at $\epsilon=\tfrac{1}{16}$ versus $0.999917$ from the MIQP—reflecting that controlling $\|E\|_F^2$ (equivalently, $\alpha$) cannot always suppress a residual phase quadrature $\beta$.

For reference, the exact-fidelity MIQP produced the following certified depth-15 weaves:
\begin{equation*}
\resizebox{\columnwidth}{!}{$
\begin{aligned}
H &\simeq \sigma_{1}^{-4}\,\sigma_{2}^{+2}\,\sigma_{1}^{-2}\,\sigma_{2}^{+2}\,\sigma_{1}^{-2}\,\sigma_{2}^{-2}\,\sigma_{1}^{+2}\,\sigma_{2}^{-4}\,\sigma_{1}^{-2}\,\sigma_{2}^{+2}\,\sigma_{1}^{+2}\,\sigma_{2}^{-2}\,\sigma_{1}^{-2},\\
X &\simeq \sigma_{2}^{+2}\,\sigma_{1}^{-4}\,\sigma_{2}^{+2}\,\sigma_{1}^{-4}\,\sigma_{2}^{+2}\,\sigma_{1}^{-4}\,\sigma_{2}^{+2},\\
T &\simeq \sigma_{1}^{+2}\,\sigma_{2}^{-2}\,\sigma_{1}^{+2}\,\sigma_{2}^{+4}\,\sigma_{1}^{-2}\,\sigma_{2}^{+2}\,\sigma_{1}^{-2}\,\sigma_{2}^{-4}\,\sigma_{1}^{+2}\,\sigma_{2}^{-2}\,\sigma_{1}^{-4}\,\sigma_{2}^{-2}.
\end{aligned}
$}
\end{equation*}

\subsection{RHO with 3-body Interactions}\label{subsec:RHOexp1}

We test the rolling–horizon optimizer on circuits built from three–body $ZZZ$ interactions placed on hyperedges of a graph. Such $3$-local Ising terms are natural in spin models with multi–spin couplings~\cite{baxter_exact_1973}, and also appear as parity checks in subsystem/topological quantum error–correction architectures that use three–qubit $XXX/ZZZ$ measurements \cite{bravyi_subsystem_2013}.

Throughout, we take the phase-rotation convention
\begin{equation}
R_{ZZZ}(\theta)\;=\;\exp\!\big(-i\tfrac{\theta}{2}\,Z\otimes Z\otimes Z\big),
\end{equation}
so that a single $R_{ZZZ}(\theta)$ may be compiled as a short parity ladder: compute the $Z$–parity of two controls into an accumulator with two CNOTs, apply a single–qubit $R_Z(\theta)$ on the accumulator, and uncompute. In a circuit diagram, therefore, one interaction term is shown in Fig.~\ref{fig:ZZZ}.

\begin{figure}[]
  \centering
  \begin{center}
\begin{quantikz}
\lstick{$q_1$} & \ctrl{2} & \qw         & \qw                & \qw & \ctrl{2} & \qw \\
\lstick{$q_2$} & \qw      & \ctrl{1}    & \qw                & \ctrl{1} & \qw & \qw \\
\lstick{$q_3$} & \targ{}  & \targ{}     & \gate{R_Z(\theta)} & \targ{}  & \targ{} & \qw
\end{quantikz}
\end{center}
 \caption{Parity-ladder compilation of a three-body interaction $R_{ZZZ}(\theta)$ in a quantum circuit.}
  \label{fig:ZZZ}
\end{figure}
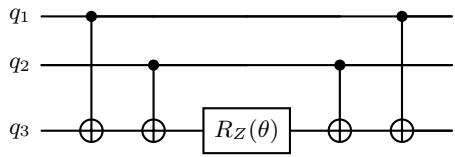

We consider the full hypergraph on 5 qubits, $K_5^{(3)}$, where there are $\binom{5}{3}=10$ hyperedges, and apply the maximally entangling phases $R_{ZZZ}(\pi/2)$ in lexicographic order to each 3-subset $\{i,j,k\}\subset\{1,\dots,5\}$ (with $1<2<3<4<5$ ordering). As a naïve target seed, we simply concatenated the decomposition above for each hyperedge (ten copies of a 5-gate pattern), and we restricted the elementary gate set to $\{\mathrm{CNOT},H,S\}$ with $S=R_Z(\pi/2)$. Running RHO with parameters \(\texttt{window\_length}=10\), \(\texttt{accept\_window}=5\), \(\texttt{max\_qubits}=4\), the optimizer produced a 36-gate circuit, saving 14 gates by sharing intermediate parities across successive terms. Although this is a remarkable advantage, this sequence is still suboptimal. A second RHO pass seeded by the 36-gate solution further reduced the count to 32 gates, indicating that while exact MILP becomes intractable at these sizes, the RHO heuristic can still realize substantial improvements on larger instances.

\subsection{RHO's Performance on a Larger Circuit}\label{subsec:RHOexp2}

We illustrate the performance of RHO—and, in particular, its dependence on the window parameters—on a four-qubit benchmark from Ref.~\cite{revbench_4b15g1}. The circuit (Fig.~\ref{fig:rho-benchmark}) comprises 15 gates from $\{\mathrm{X}, \mathrm{CNOT}=\mathrm{CX}, \mathrm{CCX}, \mathrm{CCCX}\}$; our objective is to obtain an implementation over the canonical Clifford+T set $\{H,T,\mathrm{CNOT}\}$.

To make the instance compatible with the RHO wrapper, we first translate each gate to the canonical set. We represent an $X$ as $H T^4 H$ (two Hadamards and four $T$ gates), keep $\mathrm{CNOT}$ as is, and use a standard $T$-optimal Toffoli with $6$ CNOTs, $7$ $T$-type gates (counting $T$/$T^\dagger$), and $2$ Hadamards. For the four-controlled NOT ($\mathrm{CCCX}$), we follow a Barenco-style decomposition~\cite{barenco_elementary_1995} into $6$ CNOTs and $7$ $\mathrm{controlled-\sqrt{X}}$ gates; each $\mathrm{controlled-\sqrt{X}}$ is then expressed using $2$ Hadamards, $2$ CNOTs, and $3$ $T$-type gates. After local $H$-cancellations this yields a 43-gate realisation of $\mathrm{CCCX}$ over $\{H,T,\mathrm{CNOT}\}$. Overall, the benchmark translates to a seed of $142$ one- and two-qubit gates on which we run the RHO wrapper. As a heuristic baseline, transpiling the same 142-gate seed with Qiskit \texttt{2.2.0}~\cite{aleksandrowicz_qiskit_2019} reduces it to 132 gates.

\begin{figure}[!h]
  \centering
  \begin{center}
\begin{quantikz}[row sep=0.4cm, column sep=0.25cm]
\lstick{$q_1$} &
\ctrl{2} & \qw     & \targ{} & \qw     & \ctrl{1} & \qw     & \ctrl{3} & \targ{} & \qw          & \ctrl{3} & \qw & \targ{} & \ctrl{1} & \qw & \qw \\
\lstick{$q_2$} &
\qw      & \qw     & \qw     & \ctrl{1}& \targ{}  & \targ{} & \ctrl{2} & \qw     & \targ{}      & \qw & \ctrl{1}& \ctrl{-1}& \targ{}  & \qw & \qw\\
\lstick{$q_3$} &
\targ{}  & \ctrl{1}& \qw     & \targ{} & \qw      & \ctrl{-1}& \ctrl{1} & \ctrl{-2}    & \targ{} & \qw      & \targ{}  & \ctrl{-2} & \ctrl{-1}& \targ{} & \qw\\
\lstick{$q_4$} &
\qw      & \targ{} & \ctrl{-3}& \ctrl{-1}& \qw     & \ctrl{-2}& \targ{} & \qw     & \qw          & \targ{}  & \ctrl{-1}& \qw      & \qw  & \qw & \qw
\end{quantikz}
\end{center}

  \caption{Four-qubit benchmark circuit used to probe RHO parameter sensitivity.}
  \label{fig:rho-benchmark}
\end{figure}
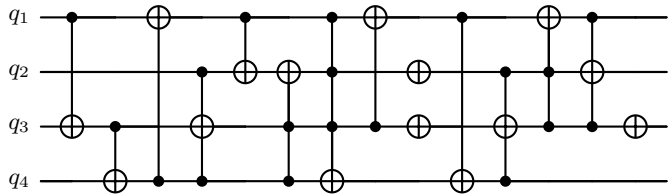

We then explore three representative settings for the RHO parameters \texttt{window\_length} ($L_W$), \texttt{accept\_window} ($A_W$), and \texttt{max\_qubits} ($Q_W$):

\begin{itemize}
  \item \emph{Large blocks with moderate qubit budget:} $(L_W,A_W,Q_W)=(10,5,3)$. This ambitious choice ran for $>24$\,h in total, with several windows timing out after $>4$\,h (solutions found but without optimality certificates). The pass reduced the seed from 142 to 122 gates. Given the exaggerated runtime, we deemed iterative passes infeasible in this configuration.

  \item \emph{Very small qubit budget:} $(10,5,2)$. Here, most windows are too small to expose non-local simplifications, but the runtime is $<2$\,min. The first pass reduced 142$\to$130 gates, a second pass reached 126, and a third pass produced no further improvement.

  \item \emph{Balanced choice:} $(7,5,3)$. This setting provided the best trade-off. The first pass reduced 142$\to$126, a second pass reached 116, and subsequent passes did not improve further. Each pass completed within 90\,min.
\end{itemize}

Overall, we find that (i) overly aggressive windows with insufficient qubit caps incur prohibitive solve times per window; (ii) small $Q_{W}$ makes passes fast but myopic; and (iii) a balanced choice (here, $L_W=7$, $A_W=5$, $Q_W=3$) yields substantial reductions with manageable runtime, enabling effective multi-pass refinement. For further context, as Qiskit produces 132 gates on the same instance, whereas our best RHO setting reaches 116 (16 fewer than Qiskit, $\approx 12.1\%$ reduction; 26 fewer than the seed, $\approx 18.3\%$).

These experiments illustrate that RHO can exploit structure that is invisible to a purely local, per-term compilation, generating shorter circuits even when exact global optimization methods on the entire circuit are computationally intractable.

\section{Conclusions}

In conclusion, we present a depth-aware mixed-integer programming framework for quantum circuit compilation that treats global phase with linear constraints, optimizes depth via explicit scheduling, and sharply tightens the search space with domain-specific valid inequalities. Empirically, enforcing the exact $\mathcal{SU}(2^Q)$ target and enabling our cuts reduces wall-clock time by large factors—often an order of magnitude and up to $\sim 40\times$—making exact certification practical on nontrivial medium-scale circuits. For approximate synthesis, linear surrogates expose accuracy–time trade-offs, while a non-convex, phase-invariant fidelity objective attains certified, best-known solutions (e.g., Hadamard with depth-15 Fibonacci-anyon weaves with $F\approx 0.999957$), and faster linearized objectives deliver the same fidelity faster. Beyond exact MIP’s reach, a rolling-horizon strategy that rolls primarily in time and enforces per-qubit closure preserves local context and yields iterative gains, including parity sharing. Empirically, RHO provides substantial gate savings on multi-body parity circuits (e.g., $50 \to 32$ gates over two passes). On a 142-gate seed circuit, RHO produces 116 gates (an 18.3\% reduction, 12.1\% fewer than a Qiskit baseline).

Methodologically, the results show that (i) linear handling of global phase and explicit depth scheduling make exact depth optimization practical at \textit{modest scales}; (ii) well-chosen, domain-aware constraints can shift instances from unsolvable to tractable; and (iii) fidelity-driven approximate formulations admit both exact (non-convex) and convex/MILP surrogates with provable relationships. Practically, the same optimization stack accommodates hardware-aware costs, topology constraints, and routing-aware compilations, offering a single point of control over objectives that matter in both the NISQ and fault-tolerant regimes.

Overall, the approach unifies provable optimality and principled heuristics within the open-source \texttt{QCOpt} package, offering a practical path to hardware-aware compilation; promising extensions include richer cut libraries, adaptive horizons and warm starts, and tighter integration with state-of-the-art hardware-compatible gates.

\begin{acknowledgments}
The authors gratefully acknowledge support from the U.S. Department of Energy (DOE) Quantum Computing Program, sponsored by the Los Alamos National Laboratory (LANL) Information Science \& Technology Institute, as well as funding from LANL’s Laboratory Directed Research and Development (LDRD) program under project ``20230091ER: Learning to Accelerate Global Solutions for Non-convex Optimization''. Z.S. also gratefully acknowledges stimulating discussions with Gavin Brennen and Max Hunter Gordon, and support from the Sydney Quantum Academy.
\end{acknowledgments}

\appendix

\section{Auxiliary Parameters for Example Targets}\label{app:example-params}

Table~\ref{tab:aux-target-params} from this appendix collects the parameters used in Sec.~\ref{sec:examples}. 
Here $Q$ is the number of qubits for the target, $|\mathbb{G}|$ is the
cardinality of the elementary gate set made available to the optimizer, and
$P$ is the maximum gate count allowed for the synthesized circuit.

\begin{table}[]
  \centering
  \footnotesize
  \caption{Parameters used for the targets in Sec.~\ref{sec:examples}.}
  \label{tab:aux-target-params}
  \begin{tabular}{@{}lccc@{}}
    \toprule
    \textbf{target} & $Q$ & $|\mathbb{G}|$ & $P$ \\
    \midrule
    \multicolumn{4}{c}{\emph{Two-qubit targets}}\\
    \midrule
    Controlled-$\sqrt{X}$                & 2 & 9   & 7  \\
    Controlled-Hadamard                   & 2 & 32  & 5  \\
    Magic                                 & 2 & 73 & 4  \\
    iSwap                      & 2 & 9   & 10 \\
    single-excitation Hadamard            & 2 & 14  & 5  \\
    \midrule
    \multicolumn{4}{c}{\emph{Three-qubit targets}}\\
    \midrule
    Toffoli (with 2-qubit gates)            & 3 & 9   & 5  \\
    $\mathrm{CNOT}_{1,3}$                 & 3 & 14  & 8  \\
    Fredkin                               & 3 & 11  & 7  \\
    Miller                                & 3 & 7   & 10 \\
    Relative Toffoli                      & 3 & 7   & 9  \\
    Margolus                              & 3 & 12  & 7  \\
    Quantum Fourier Transform             & 3 & 17  & 7  \\
    Controlled-iSwap           & 3 & 7   & 12 \\
    \midrule
    \multicolumn{4}{c}{\emph{Four-qubit targets}}\\
    \midrule
    $\mathrm{CNOT}_{4,1}$                 & 4 & 6   & 10 \\
    Double Peres                          & 4 & 9   & 7  \\
    Quantum Full Adder                    & 4 & 11  & 7  \\
    Double Toffoli                        & 4 & 10  & 7  \\
    \bottomrule
  \end{tabular}
\end{table}

\bibliography{references}

\end{document}